\def \BibTeX{{\rm B\kern-.05em{\sc i\kern-.025em b}\kern-.08em
		T\kern-.1667em\lower.7ex\hbox{E}\kern-.125emX}}
\newtheorem{theorem}{Theorem}
\newtheorem{lemma}{Lemma}
\newtheorem{example}{Example}
\newtheorem{remark}{Remark}
\begin{document}
	
	\title{Fundamental Limits of Multi-Message Private Computation}
	
	\author{
 Ali~Gholami,~\IEEEmembership{Student Member,~IEEE,}
 Kai~Wan,~\IEEEmembership{Member,~IEEE,} 
 Tayyebeh Jahani-Nezhad,~\IEEEmembership{Member,~IEEE,}  
Hua~Sun,~\IEEEmembership{Member,~IEEE,}
Mingyue~Ji,~\IEEEmembership{Member,~IEEE,}  
and~Giuseppe Caire,~\IEEEmembership{Fellow,~IEEE}
  \thanks{
A short version of this paper   was accepted  by   the 2024 IEEE International Symposium on Information Theory~\cite{ourpaper}. 
}
\thanks{A.~Gholami,  T.~Jahani-Nezhad, and G.~Caire are with the Electrical Engineering and Computer Science Department, Technische Universit\"at Berlin, 10587 Berlin, Germany (e-mail: \{a.gholami, caire, t.jahani.nezhad\}@tu-berlin.de). The work of A.~Gholami,  T.~Jahani-Nezhad, and G.~Caire was partially funded by the European Research Council under the ERC Advanced Grant N. 789190, CARENET.}
\thanks{
K.~Wan is with the School of Electronic Information and Communications,
Huazhong University of Science and Technology, 430074  Wuhan, China,  (e-mail: kai\_wan@hust.edu.cn). The work of K.~Wan  was partially funded by the   National Natural
Science Foundation of China (NSFC-12141107).}
\thanks{
H.~Sun is with the Department of Electrical Engineering, University of North Texas, Denton, TX 76203, USA (email: hua.sun@unt.edu). The work of H.~Sun was supported in part by NSF under Grant CCF-2007108, Grant CCF-2045656, and Grant CCF-2312228.
}
\thanks{
M.~Ji is with the Department of Electrical and Computer Engineering in the University of Florida, Gainesville, FL, 32611, USA (email: mingyueji@ufl.edu).  The work of M.~Ji was partially funded by National Science Foundation (NSF) Award 2312227 and CAREER Award 2145835. }
	}
	
	\maketitle
	
	\begin{abstract}
	In a typical formulation of the private information retrieval (PIR) problem, a single user wishes to retrieve one out of $ K$ files from $N$  servers without revealing the demanded file index to any server.
	This paper formulates an extended model of PIR, referred to as multi-message private computation (MM-PC), where 	instead of retrieving a single file, the user wishes to retrieve $P>1$ linear combinations of files  while preserving the privacy of the demand information. 
The MM-PC problem is a generalization of the private computation (PC) problem (where the user requests one linear combination of the files), and the multi-message private information retrieval (MM-PIR) problem   (where the user requests $P>1$ files). 
A baseline achievable scheme repeats the optimal PC scheme by Sun and Jafar $P$ times, or treats each possible demanded linear combination as an independent file and then uses the near optimal MM-PIR scheme   by Banawan and Ulukus. 
 In this paper, we propose a new MM-PC scheme that significantly improves upon the baseline schemes. In doing so, we design the queries inspired by the structure in the cache-aided scalar linear function retrieval scheme by Wan {\it et al.}, which leverages the dependency between linear functions  to reduce the amount of communications. To ensure the decodability of our scheme, we propose a new method to benefit from the existing dependency, referred to as the sign assignment step. In the end, we use Maximum Distance Separable matrices to code the queries, which allows the reduction of download from the servers, while preserving privacy. By the proposed schemes, we characterize the capacity  within a multiplicative factor of $2$.
 
 
	\end{abstract}
	
	\begin{IEEEkeywords}
		Private computation, multi-message private information retrieval, multiple linear combinations
	\end{IEEEkeywords}
	
	\section{Introduction}
	

In the private information retrieval (PIR) problem~\cite{sun2017capacity}, a user wishes to download a file by sending different queries to a group of $N$ non-colluding servers each storing the same $K$ files, while keeping the identity of the desired file secret from the servers. The information-theoretic capacity is defined as the maximum number of bits of desired information decoded per one bit of downloaded information. The authors in~\cite{sun2017capacity} show that the capacity of PIR is given by $\frac{1-1/N}{1-1/N^K}$.

Following the seminal PIR result in~\cite{sun2017capacity}, a large number of works have considered extended models of PIR. In particular,
in ~\cite{sun2018capacity,MirmohseniPFR}, the problem of \textbf{private computation (PC)} is proposed. In general, linear and multivariate polynomial operations are widely used as fundamental primitives for building the complex queries that support online big-data analysis and data mining procedures. In these scenarios,
it is too resource-consuming to locally download all input variables in order to compute the desired output value. Based on this motivation, 
the PC problem is considered in~\cite{sun2018capacity,MirmohseniPFR}, where instead of retrieving a single file, the user requests a (scalar) linear combination of the files among $M$ possible linear combinations, where each linear combination is called a message. 
An optimal PC scheme has been proposed in~\cite{sun2018capacity}. It is interesting to note that the capacity of the PC problem is exactly the same as that of the PIR problem, which is independent of $M$. 
Several extended models of the PC problem have been considered, including PC with coded storages at the servers~\cite{ObeadISIT,ObeadPFRwithcoded,KarpukPFR}, private   sequential function retrieval~\cite{TahmasebisequencePFR} (where the user wants to compute a fixed set of linear combinations while hiding the computation order), PC with polynomial functions~\cite{Obeadpolynomial2022,Ravivlangrange}, cache-aided PC~\cite{Yan2022cachePIR}, single-server PC~\cite{Heidarzadehsingle2022}, and more. 

Another line of work in PIR is the \textbf{multi-message PIR (MM-PIR)} proposed in ~\cite{banawan2018multi}. Instead of retrieving a single file,  in the MM-PIR problem, the user aims to retrieve $P>1$ files. A near-optimal MM-PIR scheme has been proposed in~\cite{banawan2018multi}.   It is also interesting to note that, even if the requested files are independent, designing the MM-PIR scheme  by jointly considering the multi-request (as in~\cite{banawan2018multi}) leads to 
 a significant increase in the retrieval rate compared to simply repeating the Sun and Jafar PIR scheme $P$ times. Other works related to MM-PIR include~\cite{shariatpanahi2018multi}, where the problem assuming that the user has private side information is studied, and ~\cite{li2018single}, ~\cite{heidarzadeh2018capacity}, which consider the MM-PIR problem with side information in the single-server case.


In this paper, we formulate a new problem, referred to as the MM-PC problem, which covers the PC and MM-PIR problems as special cases. In this setting, there are $N$ non-colluding servers, each storing a library of $M$ messages with arbitrary linear dependencies, of which $K$ are linearly independent. The user wants to retrieve a set of $P$ linearly independent messages from the servers, while keeping the identity of the requested messages secret from each server.

Two recent problems are  similar to our formulated MM-PC problem.  The private linear transformation (PLT) problem has been considered in~\cite{EsmatiISIT2021jointprivacy,Esmati2021ISITindividual,Kazemi2021XVII,Esmati2021XVII}. In the PLT problem, the user also wants to retrieve $P$ linear combinations of $\tilde{K}<K$ files while preserving the privacy of the indices of the $\tilde{K}$ files. The private distributed computing problem has been considered in~\cite{Chang2019ITWprivate,Kim2020TIFSprivate,akbari2021secure,Aliasgari2020TIFSprivate,Li2022TIFSprivate,Yu2021TCOMcoded,ZhuJSAIT2022systematic,YangISIT2021PRIVATE}, where the user wants to compute a matrix multiplication $A B_i$ where $B_1,B_2,\ldots$ are  matrices with uniform i.i.d. elements while preserving the privacy of the index $i$. 
In our considered problem, the results for the above two problems cannot be applied (or are highly inefficient).\footnote{\label{foot:inefficiency}More precisely, the PLT schemes cannot be applied to our problem, since in our problem the linear combinations are over all files, and we aim to preserve the privacy of the coefficient matrix instead of chosen files. The private distributed computing schemes are very inefficient to be applied to our problem, since we should treat each possible set of linear combinations as an ``independent'' demand matrix, and thus there is a huge number of such possible demand matrices.}  A very recent work on private multiple linear computation appeared in~\cite{zhu2024privatelc}, where the problem is to compute multiple linear combinations of some messages, which are replicated on multiple servers, by considering the case of colluding and non-responsive servers. While keeping the privacy of the requests, the scheme  in~\cite{zhu2024privatelc}   attains a tradeoff between the communication  and  computation costs, where the communication cost also includes the upload cost. 
However, when applied to the MM-PC problem considered here, the scheme of~\cite{zhu2024privatelc} achieves the rate $\frac{N-1}{N}$, which can also be achieved by repeating the PIR scheme in \cite{shah2014one} $P$ times. The main challenge addressed in our work is how to improve the repetition strategy.

\paragraph*{Contributions}

An achievable scheme by a direct extending of the optimal PC scheme in \cite{sun2018capacity} or the near optimal MM-PIR scheme in~\cite{banawan2018multi}, is proposed  which we refer to as the baseline scheme.

However, the direct combination of the PC scheme in \cite{sun2018capacity} and the   MM-PIR scheme in~\cite{banawan2018multi} is not possible. Hence, we propose a new scheme that improves over the baseline scheme, by leveraging some features of the optimal PC and near optimal MM-PIR schemes and incorporating some non-trivial novel ideas. More precisely, 
while each message is divided into multiple symbols and the queries are essentially linear combinations of these symbols, to exploit the dependency between messages, we may need to assign a specific sign to each symbol involved, referred to as sign assignment. To ensure decodability,  and inspired by \cite{wan2021optimal}, we propose a new sign assignment method which makes some of the queries linear combinations of others, and then by using Maximum Distance Separable (MDS) coding, we can reduce the amount of download, while preserving symmetry and thus privacy.  It is essential to mention that the redundancy appears as a result of the novel sign and index assignment method. Numerical evaluations show that the   improved scheme provides large performance gains   with respect to the baseline scheme for a wide range of system parameters. 
 
 \noindent{\bf Notation:}
 For $a\in \mathbb{N}$ the notation $[a]$ represents set $\{1,\dots,a\}$, and notation 
$[a:b]$ for $a,b\in\mathbb{N}$ represents set $\left\{ a,a+1,\ldots,b\right\}$. In addition, we denote the difference of two sets $\mathcal{A}$, $\mathcal{B}$ as $\mathcal{A}\setminus\mathcal{B}$, that means the set of elements which belong to $\mathcal{A}$ but not $\mathcal{B}$.

	\section{Problem setting}
	Consider $N$ non-colluding servers with $K$ files which are replicated on all servers. 	
For each $i\in[K]$, the $i^{\text{th}}$ file is a vector of large enough size  $L$, denoted by $W_{d_i}\in \mathbb{F}_q^{L}$, whose symbols take on values over a finite field $\mathbb{F}_q$. Additionally, files are independently and randomly generated with i.i.d. symbols such that 
	\begin{subequations} 
	\begin{align}
		&H(W_{d_1}) = \cdots = H(W_{d_K}) = L, \\
		&H(W_{d_1}, \cdots, W_{d_K}) = H(W_{d_1}) + \cdots + H(W_{d_K}).
	\end{align}
	\end{subequations} 
	Note that in this paper, the $\log$ used for information measures in the entropy function is base-$q$. 
 A user wants to retrieve  $P$ of $M$ possible messages from the servers, where each message is a linear combination of the $K$	files.  
For each $m\in [M]$, the $m^{\text{th}}$ message is defined as,
\begin{subequations} 
	\begin{align}
		W_m &:= \mathbf{v}_m[W_{d_1}, \dots, W_{d_K}]^T \\
		& = v_m(1) W_{d_1} + \dots + v_m(K) W_{d_K}, 
	\end{align}
	\end{subequations} 
	where $v_m(i)$ is the $i^{\text{th}}$ entry of the coefficient vector $\mathbf{v}_m$ for $i\in[K]$, and all operations are taken in $\mathbb{F}_q$. 
Without loss of generality, we assume that	$M \geq K$ and  the first $K$ messages are replicas of the $K$ independent files, i.e., $(W_1, \dots, W_K) = (W_{d_1}, \dots, W_{d_K})$. For the sake of future convenience, each message in $W_1, \dots, W_K$ is called an independent message; each other message is called a dependent message, since it is a linear combination of independent messages.
	
	Unlike~\cite{sun2018capacity} where the user requires only one message, in the MM-PC problem, the user 
	 privately generates a set of  $P$ indices $\mathcal{I} = \{\theta_1, \dots, \theta_P\}$, where $\mathcal{I}\subset [M]$  and $\theta_i \neq \theta_j$ for each $i, j \in [P]$ where $i\neq j$.  The user wishes to compute $W_{\mathcal{I}} := (W_{\theta_1}, \dots, W_{\theta_P})$ while keeping $\mathcal{I}$ secret from each server. 
Without loss of generality, we assume that 	 $W_{\theta_1}, \dots, W_{\theta_P}$ are linearly independent; otherwise, we can just reduce $P$ and let the user demand linearly independent combinations. 
	  To do so, the user generates $N$ queries $Q_1^{\mathcal{I}}, \dots, Q_N^{\mathcal{I}}$ and sends each $Q_n^{\mathcal{I}}$ to the corresponding server.   These queries are generated when the user has no knowledge of the realizations of the messages, so the queries should be independent of the messages, i.e.,
	\begin{align}
		I(Q_1^{\mathcal{I}}, \dots, Q_N^{\mathcal{I}} ; W_1, \dots, W_M) = 0. 
	\end{align}
	
	Upon receiving $Q_n^{\mathcal{I}}$, each server $n \in [N]$ generates and sends the answer $A_n^{\mathcal{I}}$ which is a function of $Q_n^{\mathcal{I}}$ and $W_1, \dots, W_M$, i.e.,  
	\begin{align}
		H(A_n^{\mathcal{I}} | Q_n^{\mathcal{I}}, W_1, \dots, W_M) = 0, n \in [N]. 
	\end{align}
Finally, the user must retrieve the desired $W_{\mathcal{I}}$ from the servers’ answers $A_n^{\mathcal{I}}$ and the queries $ Q_n^{\mathcal{I}}$ with vanishing error\footnote{The MM-PC scheme proposed in this paper however, has zero probability of error.}, i.e., 
	\begin{align}
H (W_{\mathcal{I}}|A_1^{\mathcal{I}}, \dots, A_N^{\mathcal{I}}, Q_1^{\mathcal{I}}, \dots, Q_N^{\mathcal{I}})=o(L),\label{eq:decodability}  
\end{align}	 
where $\lim_{L\to \infty } o(L)/L=0$. 
	
	The MM-PC scheme should be designed to keep the demand information $\mathcal{I}$ secret from all servers; i.e., the following privacy 
	 constraint must be satisfied, 
	\begin{align} \label{privacy}
		(Q_n^{\mathcal{I}_1}, A_n^{\mathcal{I}_1}, W_1, \dots, W_M) \sim (Q_n^{\mathcal{I}_2}, A_n^{\mathcal{I}_2}, W_1, \dots, W_M),
	\end{align}
	for all $\mathcal{I}_1, \mathcal{I}_2 \in \Omega$ and all servers $n \in [N]$, where $\Omega$ is the set of all possible $\mathcal{I}$, and $\sim$ indicates that these two random vectors follow the same distribution.
	
	The MM-PC \textit{rate}, denoted by $R$, is defined as the number of symbols recovered collectively from all the demanded messages per one downloaded symbol,
	\begin{align} \label{eq:rate}
		R :=\frac{PL}{D},
	\end{align}
	where $D$ is the expected value over random queries of the total downloaded symbols from all the servers by the user. The objective is to find to    the supremum of all achievable rates, denoted by $R^{\star}$. 
	\section{Main Results}
	In this section, we present the baseline scheme and the main results for the proposed MM-PC problem.

	\begin{theorem}[Baseline scheme] \label{cor:repetition}
		For the MM-PC problem, the following rate is achievable, 
		\begin{align}
			R_{1} = \max \left\{ \frac{1-\frac{1}{N}}{1-(\frac{1}{N})^K} + \frac{(P-1)(N-1)}{N^M \left(1-(\frac{1}{N})^K\right)}, C_{M,P} \right\}, \label{eq:rep scheme rate}
		\end{align}
		where $C_{M,P}$ represents the achieved rate of the MM-PIR scheme in~\cite{banawan2018multi} with $M$ files in the library and $P$ requests from the user. 
	\end{theorem}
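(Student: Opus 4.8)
The statement asserts that the maximum of two expressions is achievable, so the natural plan is to establish each of the two terms as an achievable rate separately and then observe that the maximum follows for free. Indeed, both the repetition-of-PC scheme and the MM-PIR-based scheme are valid MM-PC schemes, and the system parameters $N,K,M,P$ are public (they do not depend on the private demand $\mathcal{I}$). Hence the user and servers can agree in advance to run whichever of the two schemes yields the larger rate for the given parameters; this choice reveals nothing about $\mathcal{I}$, so it costs nothing in privacy and the constraint \eqref{privacy} is inherited from the selected scheme. I would therefore split the proof into two independent achievability arguments.

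For the term $C_{M,P}$ the plan is a direct reduction to multi-message PIR. I would treat the $M$ messages $W_1,\dots,W_M$ as $M$ ``files'' (ignoring their linear dependencies, which is permissible because the queries are generated with no knowledge of the message realizations) and run the Banawan--Ulukus MM-PIR scheme of \cite{banawan2018multi} with $M$ files and the demanded index set $\mathcal{I}$. Privacy follows because the MM-PIR queries are statistically symmetric over the choice of the demanded $P$-subset, giving $Q_n^{\mathcal{I}_1}\sim Q_n^{\mathcal{I}_2}$ and hence \eqref{privacy}, since each answer is a deterministic function of the query and the common messages. Decodability follows because the MM-PIR scheme recovers any prescribed $P$ files by purely linear operations on the answers; as the demanded $W_{\theta_1},\dots,W_{\theta_P}$ are linearly independent, the dependencies among the remaining messages only add redundancy and cannot remove the desired components, so \eqref{eq:decodability} holds. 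The achieved rate is $C_{M,P}$ by definition, so this branch is essentially bookkeeping.

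For the first term, $\frac{1-1/N}{1-1/N^K}+\frac{(P-1)(N-1)}{N^M(1-1/N^K)}$, the plan is to start from the optimal Sun--Jafar PC scheme of \cite{sun2018capacity}, whose single-message rate is the leading term $C_{\mathrm{PC}}=\frac{1-1/N}{1-1/N^K}$, and to run it once per demanded message while \emph{reusing} download across the $P$ runs rather than repeating it blindly. The observation I would exploit is that the symmetric structure of the PC scheme forces each server to return certain interference combinations of the $M$ messages; when $P>1$, some of these combinations already carry, or can be linearly resolved into, symbols of the other demanded messages and therefore need not be re-downloaded. Aligning the $P$ runs so that these shared symbols are transmitted only once should yield, over the $P-1$ additional demands, exactly the additive bonus $\frac{(P-1)(N-1)}{N^M(1-1/N^K)}$ in the rate, where the factor $N^M$ in the denominator reflects the subpacketization of the PC scheme. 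I would then verify that the reuse keeps each server's observed query statistically indistinguishable across demand sets (so \eqref{privacy} still holds) and that the shared symbols together with the fresh per-demand symbols form a solvable linear system for every $W_{\theta_i}$, and finally count the total expected download $D$ to match the claimed rate through \eqref{eq:rate}.

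The main obstacle is this first branch. Unlike the $C_{M,P}$ branch, which is a clean reduction, here the saving must be argued at the level of the scheme's internal symbol structure: one must pin down precisely which downloaded symbols are reusable across the $P$ runs and show that the reuse simultaneously preserves per-server symmetry (hence privacy) and leaves every demanded message decodable. The privacy check is the most delicate point, since any reuse whose pattern depends on the identities in $\mathcal{I}$ in a way visible to a server would violate \eqref{privacy}; and matching the exact factor $\frac{(P-1)(N-1)}{N^M}$ requires careful accounting of the per-order symbol counts and of the subpacketization $L$.
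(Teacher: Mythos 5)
Your proposal follows essentially the same route as the paper: achieve each term separately and take the maximum (the choice between the two schemes depends only on the public parameters $N,K,M,P$, so it is free), obtain $C_{M,P}$ by treating the $M$ messages as $M$ independent files and invoking the Banawan--Ulukus MM-PIR scheme, and obtain the first term by running the Sun--Jafar PC scheme once per demanded message and exploiting the fact that its first round downloads one fresh symbol of \emph{every} message from each server, so each run incidentally delivers $(P-1)N$ symbols of the other demanded messages. The only divergence is bookkeeping: you propose to remove those shared symbols from the download, which would give a rate of the form $L_{PC}/\bigl(D_{PC}-(P-1)N\bigr)$, whereas the paper keeps the $P$-fold repetition intact and credits the $(P-1)N$ symbols as additional recovered desired symbols, yielding $\bigl(L_{PC}+(P-1)N\bigr)/D_{PC}=C+\frac{(P-1)N}{D_{PC}}$, which is exactly the expression stated in the theorem.
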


    The first rate in~\eqref{eq:rep scheme rate} is achieved by simply repeating the single-message private computation scheme (PC scheme) in \cite{sun2018capacity} by removing some redundant   symbols downloaded in the first round of PC. The second rate in~\eqref{eq:rep scheme rate} is achieved  by 
    treating each possible demanded linear combination as an independent message, and then using the MM-PIR scheme in~\cite{banawan2018multi}. Next, we show that the order optimality of the baseline scheme.   
    \begin{theorem} [order-optimality of the baseline scheme] \label{thm:order-optimal}
        The baseline scheme in Theorem \ref{cor:repetition} is order-optimal within a multiplicative gap of $2$.
    \end{theorem}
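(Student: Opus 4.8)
The plan is to sandwich the capacity $R^{\star}$ between an elementary information-theoretic converse and the baseline achievable rate $R_1$, and to show that the two differ by at most a factor of two. Specifically, I would establish that $R^{\star}\le 1$ for every choice of $(N,K,M,P)$, and, independently, that $R_1\ge \tfrac12$; the desired inequality $R^{\star}\le 2R_1$ then follows immediately from $R^{\star}\le 1=2\cdot\tfrac12\le 2R_1$.

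For the converse, I would invoke the no-privacy information content of the demand. Since the $P$ requested messages are, by assumption, linearly independent combinations of the $K$ independent files, we necessarily have $P\le K$ and $H(W_{\mathcal I})=PL$. Because the queries are independent of the messages, the decodability requirement \eqref{eq:decodability} gives $I\big(A_1^{\mathcal I},\dots,A_N^{\mathcal I};W_{\mathcal I}\mid Q_1^{\mathcal I},\dots,Q_N^{\mathcal I}\big)=H(W_{\mathcal I})-o(L)=PL-o(L)$. Invoking the standard relation that the expected download $D$ upper bounds the conditional entropy of the answers, $D\ge H\big(A_1^{\mathcal I},\dots,A_N^{\mathcal I}\mid Q_1^{\mathcal I},\dots,Q_N^{\mathcal I}\big)\ge PL-o(L)$, and substituting into the definition of the rate \eqref{eq:rate} yields $R=PL/D\le PL/(PL-o(L))\to 1$ as $L\to\infty$. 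Hence $R^{\star}\le 1$. This step carries the only genuine (albeit routine) bookkeeping: the care needed is just the entropy-versus-expected-length inequality together with correct handling of the $o(L)$ term.

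For the achievability bound, I would simply discard the nonnegative second summand inside the first argument of the maximum in \eqref{eq:rep scheme rate}, obtaining $R_1\ge \frac{1-1/N}{1-(1/N)^K}$. Since $1-(1/N)^K<1$, the right-hand side exceeds $1-1/N$, which is at least $\tfrac12$ for all $N\ge 2$; thus $R_1\ge \tfrac12$ uniformly in $K$, $M$, and $P$. Combining the two bounds gives $R^{\star}/R_1\le 2$, which establishes the claimed order-optimality. I would close by remarking that the lower bound $R_1\ge\tfrac12$ is essentially tight, being approached as $N=2$ and $K\to\infty$, so that no constant smaller than $2$ can be extracted from pairing these particular converse and achievability bounds.
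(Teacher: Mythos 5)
Your proof is correct, but it takes a genuinely different route from the paper's. The paper upper-bounds $R^{\star}$ by the MM-PIR converse of Banawan--Ulukus for $K$ independent messages (arguing that MM-PC is at least as constrained as MM-PIR), and then treats the cases $P\le K/2$ and $P\ge K/2$ separately, in each case bounding the ratio by $\frac{1}{1-1/N}\le 2$. You instead use the elementary cut-set-type converse $R^{\star}\le 1$, obtained from $H(W_{\mathcal I}\mid Q)=PL$, the decodability condition, and $D\ge H(A_1^{\mathcal I},\dots,A_N^{\mathcal I}\mid Q_1^{\mathcal I},\dots,Q_N^{\mathcal I})$; combined with the same achievability lower bound $R_1\ge \frac{1-1/N}{1-(1/N)^K}\ge 1-\frac1N\ge\frac12$, this yields the identical final gap $\frac{1}{1-1/N}\le 2$ without any reduction to MM-PIR and without a case split. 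What your argument buys is simplicity and self-containment (no appeal to the external MM-PIR converse); what the paper's route buys is a strictly tighter converse than $1$ (e.g., $\frac{1-1/N}{1-(1/N)^{\lfloor K/P\rfloor}}$ when $P\le K/2$), which could yield a gap better than $\frac{1}{1-1/N}$ in regimes where $\lfloor K/P\rfloor\ge 2$, even though for the stated factor-$2$ claim both arguments are equally sufficient. The only point deserving a word of care in your write-up is the inequality $D\ge H(A\mid Q)$ for variable-length answers; this holds here because, conditioned on the query, each answer has a deterministic length, so the standard entropy-versus-length bound applies symbol-wise in base $q$.
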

    The detailed description of the baseline scheme for Theorem~\ref{cor:repetition} and its order optimality proof could be found in  Appendix \ref{sec:rep}.

    \begin{remark} [asymptotic optimality of the baseline scheme]
        Based on Theorem \ref{thm:order-optimal}, the gap between the optimal scheme and the baseline scheme is bounded by $\frac{1}{1-\frac{1}{N}}$, which for large $N$ converges to $1$. This shows the asymptotic optimality of the baseline scheme for large $N$. 
    \end{remark}
    
Even though the baseline scheme (which treats each linear function as one file and uses the optimal PIR scheme $P$ times) is order-optimal within $2$, it can be further improved by carefully leveraging the connection among the  linear functions.
    The achieved rate of our improved scheme is listed in the following theorem, and its description is presented in Section~\ref{sec:achievable}.
	\begin{theorem}[Proposed scheme] 
	\label{thm:improved scheme}
		For the MM-PC problem, in case $P < K$, the following rate is achievable,
		\begin{align}
			R_2 = \frac{P \sum_{i=1}^{M-P+1} \alpha_i \binom{M-P}{i-1}}{\sum_{i=1}^{M-P+1} \alpha_i \left(\binom{M-P}{i} - \binom{M-K}{i} + P \binom{M-P}{i-1}\right)},
		\end{align}
		where $\alpha_{M-P+2} = \cdots = \alpha_M = 0$, $\alpha_{M-P+1} = (N-1)^{M-P}$,  and
		\begin{align} \label{eq:stages}
			\alpha_i = \frac{1}{N-1} \sum_{m=1}^{P} \binom{P}{m} \alpha_{i+m}, \hspace{3mm} i \in [1:M-P].
		\end{align}
	\end{theorem}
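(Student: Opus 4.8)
The plan is to construct an explicit MM-PC scheme in the style of the Sun--Jafar PC and Banawan--Ulukus MM-PIR schemes, but organized so that the linear dependency among the $M$ messages can be harvested through the sign-assignment and MDS-coding ideas advertised in the contributions. I would first fix a message length $L=\sum_{i=1}^{M-P+1}\alpha_i\binom{M-P}{i-1}$ and partition each message into $L$ symbols. The queries are organized into \emph{stages} indexed by $i\in[1:M-P+1]$, where a stage-$i$ query downloads a signed sum of $i$ symbols, one from each of $i$ distinct messages, and the integer $\alpha_i$ is the number of independent blocks used at stage $i$. As in the baseline constructions, across the $N$ servers I would impose full symmetry over message subsets and apply a uniformly random permutation of message indices and symbol positions, so that the marginal query distribution seen by any single server is identical for every demand set $\mathcal{I}$; this will later discharge the privacy constraint~\eqref{privacy} essentially for free.

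The crux of the construction is the \emph{sign assignment}. Because only $K$ of the $M$ messages are linearly independent (and $P<K$ guarantees room to exploit this), a signed sum of symbols drawn from a sufficiently dependent subset is itself a linear combination of symbols already available from the $K$ independent messages. Following \cite{wan2021optimal}, I would choose the signs at each stage $i$ so that the dependency among the $M-K$ dependent messages renders exactly $\binom{M-K}{i}$ of the stage-$i$ sums per block linearly expressible in terms of the others. These redundant equations are then removed by MDS-coding the stage-$i$ queries, so that each server transmits $\binom{M-K}{i}$ fewer combinations while the per-server symmetry (and hence privacy) is preserved; this is the source of the $-\binom{M-K}{i}$ term in the denominator. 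The remaining $\binom{M-P}{i}$ term counts the pure-interference sums over $i$ undesired messages, and the $P\binom{M-P}{i-1}$ term counts the desired sums, one per desired message combined with the $\binom{M-P}{i-1}$ choices of co-appearing undesired messages.

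Decodability I would establish by induction on the stage index, running the standard ``interference-as-side-information'' argument of PC/MM-PIR: the pure-interference sums decoded from one server at a given stage act as side information that cancels the undesired part of the desired sums downloaded from the other $N-1$ servers at the higher stages. Tracking this bookkeeping is precisely what forces the recursion~\eqref{eq:stages}: the factor $\frac{1}{N-1}\sum_{m=1}^{P}\binom{P}{m}\alpha_{i+m}$ equates the side information produced at stage $i$ with the side information consumed to open up desired symbols at stages $i+1,\dots,i+P$, with $\binom{P}{m}$ counting the ways in which $m$ of the $P$ desired messages can be injected into a growing subset. The boundary value $\alpha_{M-P+1}=(N-1)^{M-P}$ seeds the top stage, at which a desired sum already contains all $M-P$ undesired messages and no further interference can be generated. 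Once decodability and privacy hold, the rate~\eqref{eq:rate} follows by dividing the $P\sum_i\alpha_i\binom{M-P}{i-1}$ desired symbols by the total expected download $D=\sum_i\alpha_i\big(\binom{M-P}{i}-\binom{M-K}{i}+P\binom{M-P}{i-1}\big)$, giving $R_2$.

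I expect the main obstacle to be verifying decodability \emph{after} the MDS step. The sign assignment deliberately makes some queries linear combinations of others, so once MDS coding deletes $\binom{M-K}{i}$ downloads per block one must prove that the surviving equations still have full rank over $\mathbb{F}_q$ for recovering all $P$ desired messages, i.e., that the rank deficiency introduced by the message dependencies equals \emph{exactly} the redundancy we chose to delete, neither more nor less. This requires showing that the signs can be assigned globally and consistently across overlapping subsets rather than only within a single subset, and that the recursion~\eqref{eq:stages} yields integer $\alpha_i$ for which the side-information ledger closes at every stage; choosing the field size $q$ large enough to keep the MDS generator and the dependency matrices simultaneously nonsingular will also be needed.
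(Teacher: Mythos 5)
Your roadmap coincides with the paper's actual construction: queries organized into rounds of $i$-sums with $\alpha_i$ repetitions (the paper calls your ``stages'' rounds, each consisting of $\alpha_i$ stages), coded-caching-style index symmetry, a sign assignment in the spirit of the cache-aided scalar linear function retrieval scheme yielding $\binom{M-K}{i}$ redundant sums per stage, MDS compression to delete that redundancy without breaking symmetry, and the side-information ledger that forces the recursion for $\alpha_i$ and the boundary value $\alpha_{M-P+1}=(N-1)^{M-P}$. However, the proposal stops exactly where the paper's proof actually begins, and the item you defer is a genuine gap, not a routine verification. You write that one must show ``the signs can be assigned globally and consistently across overlapping subsets,'' but you give no rule that accomplishes this. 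Importing the sign assignment of \cite{wan2021optimal} as-is (what the paper calls ``structure plus'' throughout) provably does \emph{not} work here: unlike the single-shot delivery of \cite{wan2021optimal}, the MM-PC scheme reuses a signed side-information sum generated in round $i-1$ inside a round-$i$ query, and with an unaltered sign structure the embedded copy has the wrong relative signs, so the known sum cannot be cancelled to expose the new desired symbol. The paper's resolution --- alternating structure plus and structure minus across consecutive rounds, combined with alternating signs inside the lexicographically ordered independent and dependent parts --- is the key idea; its Appendix A verifies sign consistency of informative and useless queries round by round, and its Appendix D re-proves the $\binom{M-K}{i}$ redundancy under structure minus by modifying the multicast composition and decoding coefficients of \cite{wan2021optimal}. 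Your plan neither supplies this mechanism nor an alternative, so decodability after the MDS step remains unproved.

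The second gap is the claim that privacy follows ``essentially for free'' from subset symmetry and random permutations. That argument only covers the index structure. The signs are demand-dependent: the user privately relabels the messages so that the $P$ demanded ones sit inside the independent set, and the plus/minus pattern of every query depends on which messages are (re)labeled independent versus dependent. A server seeing raw signed queries could therefore infer information about $\mathcal{I}$. The paper needs additional randomization --- the i.i.d.\ multiplicative factors $\sigma_i\in\{\pm 1\}$ on symbol positions and a uniform random sign flip of each whole query --- and then a nontrivial argument (the algorithm in its Appendix B, with the consistency lemma for already-fixed variables) showing that for any two demand sets there is a bijection between the realizations of these random variables that produces identical query distributions. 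Without this, the privacy constraint~\eqref{privacy} is not established. A minor further point: your message length omits the factor $N$ (the paper has $L = N\sum_{i=1}^{M-P+1}\alpha_i\binom{M-P}{i-1}$); this cancels in the rate but matters for the subpacketization bookkeeping.
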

 
    The proofs of decodability and privacy of the proposed scheme for Theorem~\ref{thm:improved scheme} are provided in Appendices~\ref{sec:decodability} and~\ref{sec:privacy}, respectively. 
Fig.~\ref{fig:compare} compares the baseline scheme with the proposed scheme, for the case where $K=7$, $N=2$, $M\in\{10,15\}$, and $P\in [2:6]$. 
As shown in Fig.~\ref{fig:compare}, when $P=2$, the baseline scheme is   slightly better than the proposed scheme; when $P>2$, the improvement over the baseline scheme becomes more significant as $P$ increases. Note that the rate of the proposed scheme has very little dependence on $M$, since the solid red line and the dashed red line almost coincide. This can be also observed  in Fig.~\ref{fig:compare_msweep}, sweeping on $M$ does not change much  the rate for the proposed scheme. 

	\begin{figure} 
		\centering 
		\includegraphics[width=90mm]{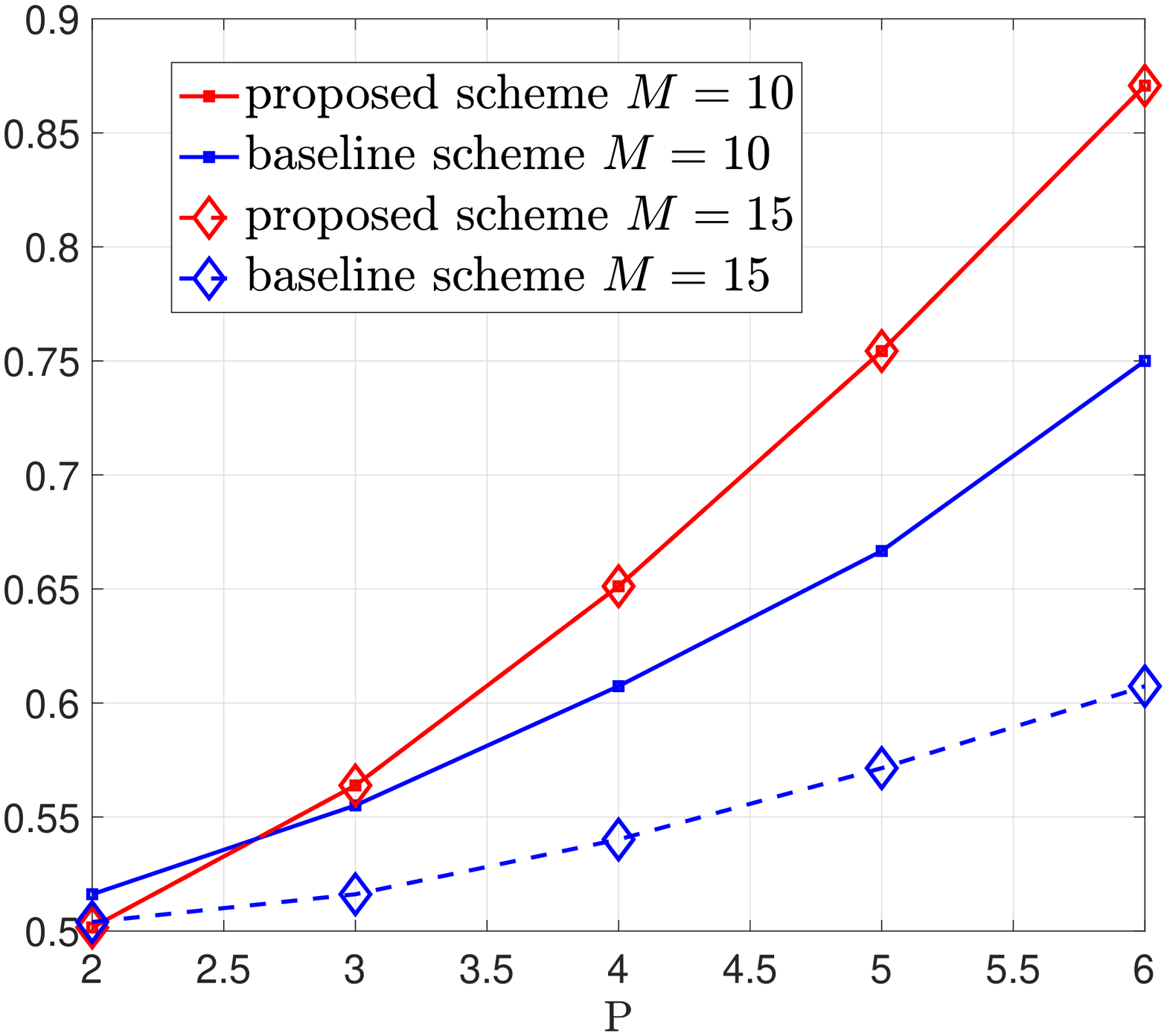}
		\caption{Comparison of the rates. The red lines are for our proposed scheme and the blue ones for the baseline scheme. The parameters are $K=7, N=2$. $P$ and $M$ changes as in the figure.}
  \label{fig:compare}
		\vspace{-5mm}
	\end{figure}

        \begin{figure} 
		\centering 
		\includegraphics[width=90mm]{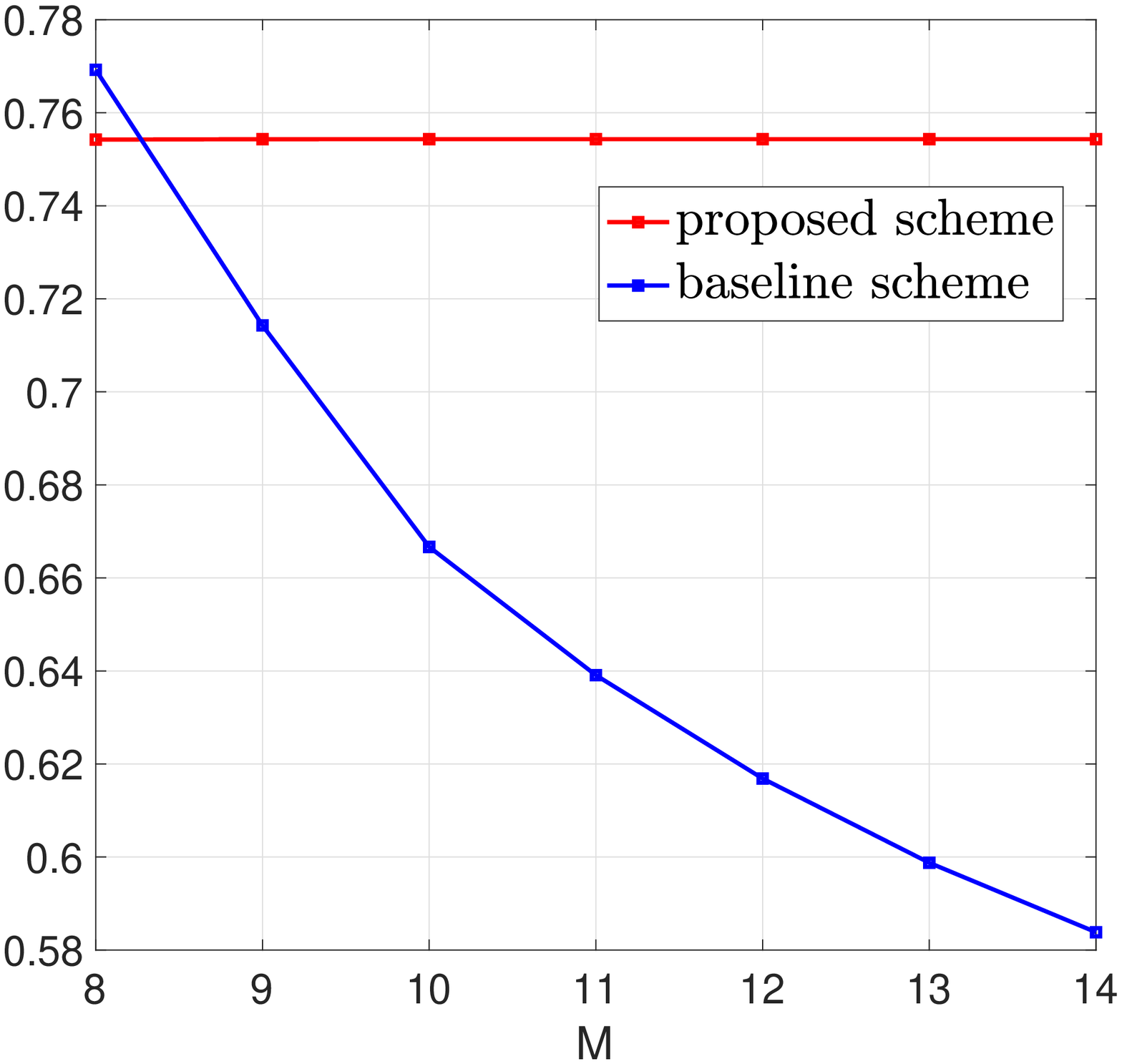}
		\caption{Comparison of the rates. The red lines are for our proposed scheme and the blue ones for the baseline scheme. The parameters are $K=7, N=2$, $P=5$ and $M$ changes as in the figure. As evident, the dependency of the proposed scheme on $M$ is almost zero.}
  \label{fig:compare_msweep}
		\vspace{-5mm}
	\end{figure}

        \section{New Proposed MM-PC Scheme Through an Example} \label{sec:exp achievable}
        For the sake of clarity, we illustrate the proposed scheme through an example and provide the general description in the next section.
        

        The step-by-step example in this section considers $M=5$, $K=3$, $P=2$, and $N=2$. 
        The messages are denoted by letters $\{a,b,c,d,e\}$, where $\{a,b,c\}$ are the independent files and $\{d,e\}$ are any desired linear combinations of the independent files in the given finite field $\mathbb{F}_q$. In this example, the demanded files are $\mathcal{I}=\{a,b\}$. Each message is partitioned into $L=68$ symbols and the $i^{\text{th}}$ symbol of each message is denoted by subscript (index) $i$, e.g., $a_i$ denotes the $i^{\text{th}}$ symbol of message $a$.  Note that the general relation between the system parameters and the subpacketization level $L$ follows $L = N \sum_{i=1}^{M-P+1} \alpha_i \binom{M-P}{i-1}$,  where the exact calculation appears in Appendix~\ref{sec:decodability}. 

\textbf{Step 1: Permutation and Relabeling. } A permutation function $\pi(\cdot)$ on $[L]$ is chosen uniformly at random over all the $L!$ possibilities. The symbols of every message are permuted by $\pi$. For simplicity, the permuted messages are denoted by the same letters $\{a,b,c,d,e\}$, e.g., message $a=(a_1,a_2, ..., a_{68})$ turns into $a=(a_{\pi(1)}, a_{\pi(2)}, ..., a_{\pi(68)})$. Furthermore, we define the variables $\sigma_i \in \{-1,1\}, \forall i\in [L]$, referred to as multiplicative factors, each chosen uniformly i.i.d. For all messages, the symbol of position $i$ is multiplied by $\sigma_i$. For example, the message $a$ is transformed into $a=(\sigma_1 a_{\pi(1)}, \sigma_2 a_{\pi(2)}, ..., \sigma_{68} a_{\pi(68)})$. For the ease of description, in this example we assume that  the permutation   is  $(1,2,\ldots,68)$ and 
    that    $\sigma_i=1, \forall i \in [68]$. 
    
     We also perform a relabeling on the message labels $1,...,M$, such that the first $P$ labels of messages (i.e., $W_1,...,W_P$) are the demanded messages. Furthermore, we change the set of independent messages, such that they contain the $P$ demanded messages. It will be proved in Appendices \ref{sec:decodability} and \ref{sec:privacy} that these actions will not hurt the decodability and privacy of our scheme. In this example, this is already the case and there is no change needed.

	\textbf{Step 2: Query Structure and Number of Repetitions.} The queries to servers are linear combinations of symbols from different messages. They are categorized into multiple \textit{rounds}, where round $i$ contains queries summing $i$ different symbols. Each round itself is also split into multiple \textit{stages}. Each stage of round $i$ contains all $\binom{M}{i}$ choices of $i$ messages from the total $M$. For instance for a stage of round $2$, the queries are of the form $\{a_*+b_*,a_*+c_*,a_*+d_*,a_*+e_*,b_*+c_*,b_*+d_*,b_*+e_*,c_*+d_*,c_*+e_*,d_*+e_*\}$, which covers all $\binom{5}{2}=10$ ways of choosing $2$ messages from the total $5$. Note that the subscript $*$  denotes some specific symbol index. The number of stages of round $i$ denoted by $\alpha_i$, follows \eqref{eq:stages}. The explanation to calculate $\alpha_i$ is provided in Section \ref{sec:achievable}, Step 2. For our example we have $\alpha_5=0, \alpha_4=1, \alpha_3=2, \alpha_2=5, \alpha_1=12$.\footnote{The number of stages calculated here is completely different from that of \cite{banawan2018multi}. The main reason is that in the scheme~\cite{banawan2018multi}, every query containing symbols of demanded messages contributes to decoding new demanded symbols, while in the proposed scheme because of the special index assignment, designed in cooperation with the sign assignment, this is not possible. }

	\textbf{Step 3: Initialization. } This step corresponds to queries of round $1$ (single symbols). Since $\alpha_1=12$, from each server the user queries $12$ symbols of each message, depicted in Table \ref{tab:round1}. 
	
	
	\begin{table}
        \vspace{+3mm}
		\centering
		\begin{tabular}{| c | c | c | c |} 
			\hline
			Round & Stage & Server $1$ & Server $2$ \\
			\hline
            \multirow{3}{*}{round $1$}
			& stage $1$ &
			$a_1,b_1,c_1,d_1,e_1$ & $a_{13},b_{13},c_{13},d_{13},e_{13}$ \\
			\cline{2-4}
			& \vdots & \vdots & \vdots \\
			\cline{2-4}
			& stage $12$ & $a_{12},b_{12},c_{12},d_{12},e_{12}$ & $a_{24},b_{24},c_{24},d_{24},e_{24}$ \\
			\hline
		\end{tabular}
		\caption{Round $1$ queries.}
  \label{tab:round1}
	\end{table}
	
	
	\textbf{Step 4: Index Assignment. }
    Since the general structure of queries is known from Step~2, i.e., the number of stages in each round and that each stage of round $i$ contains all possible $i$-sums, we need to determine the symbol indices for each query. The index assignment is inspired by the delivery phase of the coded caching scheme in~\cite{maddah2014fundamental}. Note that the construction of this coded caching scheme is completely symmetric over files if the number of files is equal to the number of users and each user requests a distinct file.
    
    Consider the first stage of round $2$ queries to server $1$.
    There are $\binom{5}{2}=10$ queries of the form $\{a_*+b_*,a_*+c_*,a_*+d_*,a_*+e_*,b_*+c_*,b_*+d_*,b_*+e_*,c_*+d_*,c_*+e_*,d_*+e_*\}$.
    
    Let us first determine the indices of the   queries in $\{a_*+c_*,a_*+d_*,a_*+e_*\}$,
    where the indices of $c_*,d_*, e_*$  in $\{a_*+c_*,a_*+d_*,a_*+e_*\}$ should be the same (treated as the side information to decode new symbols of message $a$). In addition, these symbols $c_*,d_*, e_*$  should have been downloaded previously. Hence, we can let these queries be 
     $\{a_{25}+c_{13},a_{26}+d_{13},a_{27}+e_{13}\}$, where $\{25,26,27\}$ are new indices of message $a$ (the first $24$ indices are already used in round $1$) and the remaining parts are symbols with index $13$, already received from queries to server $2$ in round $1$. 
      Similarly in the same stage, for the demanded message $b$ the queries should be $\{b_{25}+c_{14},b_{26}+d_{14},b_{27}+e_{14}\}$, which use symbols  $c_{14}, d_{14}, e_{14}$ as side information. 

    We then determine the indices in the remaining queries $\{a_*+b_*,c_*+d_*,c_*+e_*,d_*+e_*\}$. 
    Recall that the symbols which are treated as side information to decode $a$ are with index $13$, and the symbols which are treated as side information to decode $b$ are with index $14$. Hence 
     $a_*+b_*$ should be  $a_{14}+b_{13}$. In addition, the symbols which are treated as side information to (virtually) decode $c,d,e$ are with indices $25,26,27$, respectively. Thus these remaining  queries should be  
  $\{a_{14}+b_{13},c_{26}+d_{25},c_{27}+e_{25},d_{27}+e_{26}\}$. 
  
  For the first stage of round $2$ queries to server $2$, the same process repeats using symbols with indices $1$ and $2$ acting as side information and decoding new symbols with indices $\{28,29,30\}$ for messages $a$ and $b$. The other $4$ stages of round $2$ follow the same procedure. 
  The  queries of round $2$ are given in Table \ref{tab:round2example}. 
    
    \begin{table}
    \vspace{+3mm}
		\centering
		\begin{tabular}{| c | c | c | c |}
			\hline
			Round & Stage & Server $1$ & Server $2$ \\
			\hline
			\multirow{50}{*}{round $2$}
			& \multirow{10}{*}{stage $1$}
                &
                        $a_{25}+c_{13}$ & $a_{28}+c_{1}$ \\
 			        &&$a_{26}+d_{13}$ & $a_{29}+d_{1}$ \\
 			        &&$a_{27}+e_{13}$ & $a_{30}+e_{1}$ \\
                        &&$b_{25}+c_{14}$ & $b_{28}+c_{2}$ \\
 			        &&$b_{26}+d_{14}$ & $a_{29}+d_{2}$ \\
                        &&$b_{27}+e_{14}$ & $b_{30}+e_{2}$ \\
                        &&$a_{14}+b_{13}$ & $a_{2}+b_{1}$ \\
                        &&$c_{26}+d_{25}$ & $c_{29}+d_{28}$ \\
                        &&$c_{27}+e_{25}$ & $c_{30}+e_{28}$ \\
                        &&$d_{27}+e_{26}$ & $d_{30}+e_{29}$ \\
			\cline{2-4}
			& \multirow{10}{*}{stage $2$}
                &
 					$a_{31}+c_{15}$ & $a_{34}+c_{3}$ \\
 					&&$a_{32}+d_{15}$ & $a_{35}+d_{3}$ \\
 					&&$a_{33}+e_{15}$ & $a_{36}+e_{3}$ \\
                        &&$b_{31}+c_{16}$ & $b_{34}+c_{4}$ \\
 					&&$b_{32}+d_{16}$ & $a_{35}+d_{4}$ \\
                        &&$b_{33}+e_{16}$ & $b_{36}+e_{4}$ \\
                        &&$a_{16}+b_{15}$ & $a_{4}+b_{3}$ \\
                        &&$c_{32}+d_{31}$ & $c_{35}+d_{34}$ \\
                        &&$c_{33}+e_{31}$ & $c_{36}+e_{34}$ \\
                        &&$d_{33}+e_{32}$ &$d_{36}+e_{35}$ \\
			\cline{2-4}
			& \multirow{10}{*}{stage $3$}
                &
 					$a_{37}+c_{17}$ & $a_{40}+c_{5}$ \\
 					&&$a_{38}+d_{17}$ & $a_{41}+d_{5}$ \\
 					&&$a_{39}+e_{17}$ & $a_{42}+e_{5}$ \\
                        &&$b_{37}+c_{18}$ & $b_{40}+c_{6}$ \\
 					&&$b_{38}+d_{18}$ & $a_{41}+d_{6}$ \\
                        &&$b_{39}+e_{18}$ & $b_{42}+e_{6}$ \\
                        &&$a_{18}+b_{17}$ & $a_{6}+b_{5}$ \\
                        &&$c_{38}+d_{37}$ & $c_{41}+d_{40}$ \\
                        &&$c_{39}+e_{37}$ & $c_{42}+e_{40}$ \\
                        &&$d_{39}+e_{38}$ & $d_{42}+e_{41}$ \\
			\cline{2-4}
			& \multirow{10}{*}{stage $4$}
                &
 					$a_{43}+c_{19}$ & $a_{46}+c_{7}$ \\
 					&&$a_{44}+d_{19}$ & $a_{47}+d_{7}$ \\
 					&&$a_{45}+e_{19}$ & $a_{48}+e_{7}$ \\
                        &&$b_{43}+c_{20}$ & $b_{46}+c_{8}$ \\
 					&&$b_{44}+d_{20}$ & $a_{47}+d_{8}$ \\
                        &&$a_{20}+b_{19}$ & $b_{48}+e_{8}$ \\
                        &&$a_{6}+b_{5}$ & $a_{8}+b_{7}$ \\
                        &&$c_{44}+d_{43}$ & $c_{47}+d_{46}$ \\
                        &&$c_{45}+e_{43}$ & $c_{48}+e_{46}$ \\
                        &&$d_{45}+e_{44}$ & $d_{48}+e_{47}$ \\
			\cline{2-4}
			& \multirow{10}{*}{stage $5$}
                &
 					$a_{49}+c_{21}$ & $a_{52}+c_{9}$ \\
 					&&$a_{50}+d_{21}$ & $a_{53}+d_{9}$ \\
 					&&$a_{51}+e_{21}$ & $a_{54}+e_{9}$ \\
                        &&$b_{49}+c_{22}$ & $b_{52}+c_{10}$ \\
 					&&$b_{50}+d_{22}$ & $a_{53}+d_{10}$ \\
                        &&$b_{51}+e_{22}$ & $b_{54}+e_{10}$ \\
                        &&$a_{22}+b_{21}$ & $a_{10}+b_{9}$ \\
                        &&$c_{50}+d_{49}$ & $c_{53}+d_{52}$ \\
                        &&$c_{51}+e_{49}$ & $c_{54}+e_{52}$ \\
                        &&$d_{51}+e_{50}$ & $d_{54}+e_{53}$ \\
			\hline
		\end{tabular}
		\caption{Round 2 of queries.}
  \label{tab:round2example}
	\end{table}

    A stage of round $3$ contains all $\binom{5}{3}$ ways of choosing $3$ messages out of $5$, i.e., $\{a_*+b_*+c_*,a_*+b_*+d_*,a_*+b_*+e_*,a_*+c_*+d_*,a_*+c_*+e_*,a_*+d_*+e_*,b_*+c_*+d_*,b_*+c_*+e_*,b_*+d_*+e_*,c_*+d_*+e_*\}$. 
    
    The queries $\{a_*+c_*+d_*,a_*+c_*+e_*,a_*+d_*+e_*\}$, 
    are used to decode new symbols of $a$. 
    Consider the first stage of round $3$ queries to server $1$. The new indices for the symbols of the demanded message $a$ are $\{55,56,57\}$, since the first $54$ symbols of $a$ have already appeared in the first two rounds. The side information part is duplicated from the first stage of round $2$ queries to server $2$, i.e., $\{c_{29}+d_{28}, c_{30}+e_{28}, d_{30}+e_{29}\}$. Therefore, these queries would be $\{a_{55}+c_{29}+d_{28}, a_{56}+c_{30}+e_{28}, a_{57}+d_{30}+e_{29}\}$. Similarly for decoding new symbols of $b$, the queries are $\{b_{55}+c_{35}+d_{34}, b_{56}+c_{36}+e_{34}, b_{57}+d_{36}+e_{35}\}$, where the side information parts are duplicated from the second stage of round $2$ queries to server $2$.  
    One observes that when $c$ and $d$ appear together in a query, the indices of the other involved symbols  are the same, i.e., in $\{a_{55}+c_{29}+d_{28},b_{55}+c_{35}+d_{34}\}$, both $a$ and $b$ have index $55$. When $a$ and $d$ appear together, the indices of the other involved symbols  are both $29$; see $\{a_{55}+c_{29}+d_{28},a_{57}+d_{30}+e_{29}\}$.
    One can verify that the same structure holds for any  two messages appearing together. It is also interesting to see that this phenomenon also exists in the coded caching scheme in~\cite{maddah2014fundamental}, which is the core to preserve the privacy.
    
    Among the remaining queries $\{a_*+b_*+c_*,a_*+b_*+d_*,a_*+b_*+e_*,c_*+d_*+e_*\}$, first consider $a_*+b_*+c_*$. To determine the index of $a$, we search for a query containing both $b$ and $c$, e.g., $b_{55}+c_{35}+d_{34}$. 
    Since $b_{55}+c_{35}+d_{34}$ and $a_*+b_*+c_*$ both contain $b$ and $c$, the indices of $a$ and $d$ should be the same, which is $34$.
To determine the index of $b$, since in the query $a_{55}+c_{29}+d_{28}$  containing both $a$ and $c$, $b$ should  have the same index as $d_{28}$, which is 
$28$. To determine the index of $c$,  there has not been any query containing both $a$ and $b$ in this stage yet. Note that $a_{34}$ and $b_{28}$ have already been recovered by the user from the previous transmissions. Hence, we use symbol $c_{23}$ transmitted  in the first round of queries to server $2$, such that the whole sum
$a_{34}+b_{28}+c_{23}$ could be recovered by the previous rounds and thus redundant, which could be removed later in Step 6 to reduce the transmissions.  
  Let us then consider the indices of $a_*+b_*+d_*$. $a_*$ should have the same index as $c_{35}$ in $b_{55}+c_{35}+d_{34}$. $b_*$ should have the same index as $c_{29}$ in $a_{55}+c_{29}+d_{28}$. $d_*$ should have the same index as $c_{23}$ in $a_{34}+b_{28}+c_{23}$. Thus we fix  $a_*+b_*+d_*=a_{35}+b_{29}+d_{23}$. 
    Similarly,    we can subsequently determine the indices of $a_*+b_*+e_*,c_*+d_*+e_*$ as $a_{36}+b_{30}+e_{23},c_{57}+d_{56}+e_{55}$. 
    
   By this approach,  Round $3$ and $4$ queries are obtained as shown in Table \ref{tab:rounds34example}.

    \begin{table}
    \vspace{+3mm}
		\centering
            \resizebox{\columnwidth}{!}{%
		\begin{tabular}{| c | c | c | c |}
			\hline
                Round & Stage & Server $1$ & Server $2$ \\
                \hline
			\multirow{20}{*}{round $3$}
			& \multirow{10}{*}{stage $1$}
			& $a_{55}+c_{29}+d_{28}$ & $a_{58}+c_{26}+d_{25}$ \\
			& & $a_{56}+c_{30}+e_{28}$ & $a_{59}+c_{27}+e_{25}$ \\
			& & $a_{57}+d_{30}+e_{29}$ & $a_{60}+d_{27}+e_{26}$ \\
			& & $b_{55}+c_{35}+d_{34}$ & $b_{58}+c_{32}+d_{31}$ \\
			& & $b_{56}+c_{36}+e_{34}$ & $b_{59}+c_{33}+e_{31}$ \\
			& & $b_{57}+d_{36}+e_{35}$ & $b_{60}+d_{33}+e_{32}$ \\
			& & $a_{34}+b_{28}+c_{23}$ & $a_{31}+b_{25}+c_{11}$ \\
			& & $a_{35}+b_{29}+d_{23}$ & $a_{32}+b_{26}+d_{11}$ \\
			& & $a_{36}+b_{30}+e_{23}$ & $a_{33}+b_{27}+e_{11}$ \\
			& & $c_{57}+d_{56}+e_{55}$ & $c_{60}+d_{59}+e_{58}$ \\
			\cline{2-4}
			& \multirow{10}{*}{stage $2$}
			& $a_{61}+c_{41}+d_{40}$ & $a_{64}+c_{38}+d_{37}$ \\
			& & $a_{62}+c_{42}+e_{40}$ & $a_{65}+c_{39}+e_{37}$ \\
			& & $a_{63}+d_{42}+e_{41}$ & $a_{66}+d_{39}+e_{38}$ \\
			& & $b_{61}+c_{47}+d_{46}$ & $b_{64}+c_{44}+d_{43}$ \\
			& & $b_{62}+c_{48}+e_{46}$ & $b_{65}+c_{45}+e_{43}$ \\
			& & $b_{63}+d_{48}+e_{47}$ & $b_{66}+d_{45}+e_{44}$ \\
			& & $a_{46}+b_{40}+c_{24}$ & $a_{43}+b_{37}+c_{12}$ \\
			& & $a_{47}+b_{41}+d_{24}$ & $a_{44}+b_{38}+d_{12}$ \\
			& & $a_{48}+b_{42}+e_{24}$ & $a_{45}+b_{39}+e_{12}$ \\
			& & $c_{63}+d_{62}+e_{61}$ & $c_{66}+d_{65}+e_{64}$ \\
			\hline
			\multirow{5}{*}{round $4$}
			& \multirow{5}{*}{stage $1$}
			& $a_{67}+c_{60}+d_{59}+e_{58}$ & $a_{68}+c_{57}+d_{56}+e_{55}$ \\
			& & $b_{67}+c_{66}+d_{65}+e_{64}$ & $b_{68}+c_{63}+d_{62}+e_{61}$ \\
			& & $a_{64}+b_{58}+c_{53}+d_{52}$ & $a_{61}+b_{55}+c_{50}+d_{49}$ \\
			& & $a_{65}+b_{59}+c_{54}+e_{52}$ & $a_{62}+b_{56}+c_{51}+e_{49}$ \\
			& & $a_{66}+b_{60}+d_{54}+e_{53}$ & $a_{63}+b_{57}+d_{51}+e_{50}$ \\			
			\hline			
		\end{tabular} }
	\caption{Rounds 3 and 4 of queries.}
 \label{tab:rounds34example} 
\vspace{-5mm}
\end{table}

 	So far, the only operation used in queries is addition. To exploit the dependency between messages, we may need to also use negation, referred to as sign assignment.  It will be proved later that by the proposed sign assignment, 
  in a stage of round $i$, out of $\binom{M}{i}$ total queries, $\binom{M-K}{i}$ of them are redundant and can be written as linear combinations of others. We should point out that this redundancy result is also achieved by the PC scheme in \cite{sun2018capacity}. However, due to the fact that we attempt to retrieve multiple messages instead of one, it is not possible to utilize the sign assignment in  \cite{sun2018capacity} to achieve the same amount of redundancy. Instead, we propose a   new sign assignment approach resulting in this amount of redundancy while guaranteeing the decodability.
 	
 	\textbf{Step 5: Sign assignment. }  The messages in each query should be sorted based on the lexicographic order of the messages. In round $2$, for each query with one symbol from  $\{a,b,c\}$ and one symbol from  $\{d,e\}$, a plus sign is used in between the symbols; for each other query in this round, a minus sign is used. For instance, the queries in the first stage of round $2$ sent to Server $1$ would be as follows.
        \begin{align*}
            q_1=a_{25}-c_{13}, \\
            q_2=a_{26}+d_{13}, \\
            q_3=a_{27}+e_{13}, \\
            q_4=b_{25}-c_{14}, \\
            q_5=b_{26}+d_{14}, \\
            q_6=b_{27}+e_{14}, \\
            q_7=a_{14}-b_{13}, \\
            q_8=c_{26}+d_{25}, \\
            q_9=c_{27}+e_{25}, \\
            q_{10}=d_{27}-e_{26}.
        \end{align*}  
    Among these queries, the query $q_{10}=d_{27}-e_{26}$ can be written as a linear combination of the other queries, thus being redundant. To show this, suppose $d=a+b$ and $e=b+c$; one  can check that the equation $q_{10}=q_3+q_6-q_5-q_8+q_7+q_1+q_4$ holds.


  In general for the sign assignment process, each query is first divided into two parts. The first part contains symbols of independent messages and the second part symbols of dependent messages, which are called independent symbols and dependent symbols, respectively. 
  So each query $q$ is written as
 	\begin{align} \label{eq:strplus}
 		q=(\text{independent symbols})\pm(\text{dependent symbols}),
 	\end{align}
  where in each parenthesis, symbols are ordered based on the label of the message (ranging from $1$ to $M$), from lowest to highest. 
    The signs in each parenthesis are changing alternatively between $+$ and $-$, with the first symbol taking   $+$. 
    When the plus sign is used in \eqref{eq:strplus}, the sign assignment is called \textit{structure plus}; when minus sign is used, it is called \textit{structure minus}. Round $2$ queries use structure plus, then round $3$ uses minus, and round $4$ again uses plus.  This is the most non-trivial step in the proposed sign assignment. Note that  besides alternating signs in each parenthesis  in~\eqref{eq:strplus}, we also alternate the signs between the two parenthesis in each query according to the round numbers. The latter sign alternating is needed  to ensure the decodability of the scheme which will be proved in Appendix \ref{sec:decodability}. 
    After these steps, each query is solely randomly multiplied by a $+1$ or $-1$, uniformly at random, referred to as switching random variables.  This is to ensure the existence of mapping of queries for two different sets of demanded messages by a choice of $\{\sigma_i\}$ and these switching RVs, which is required in the  privacy proof in Appendix \ref{sec:privacy}. 
     We assume all to be $+1$ in the example. After the sign assignment, the queries follow Tables \ref{tab:exmplcompleter2} and \ref{tab:exmplcompleter34}. Notice that the multiplicative matrices $\mathbf{G}$ in Table \ref{tab:exmplcompleter2} is due to the next step which is explained in the following.

 	\textbf{Step 6: Remove Redundancy. } In the first stage of round~2, as pointed out in the previous step, among the queries $\{q_1,q_2,...,q_{10}\}$, $q_{10}$ is redundant.
  On the other hand, since both $a_{14}$ and $b_{13}$ are downloaded in the first and second stages of round $1$ from Server $2$, $q_7=a_{14}-b_{13}$ is also redundant. 
  However, it is not possible to simply delete this query since it jeopardizes the symmetry and consequently the privacy. Instead, we use a coding strategy as follows. Instead of sending the $10$ queries $\{q_1,q_2,...,q_{10}\}$, the following $8$ queries are sent.
        \begin{align}
            \mathbf{q} = \mathbf{G}_{8\times 10}  [q_1,q_2,...,q_{10}]^T,
        \end{align}
    where $\mathbf{G}_{8\times 10}$ is an  MDS matrix of size $8\times 10$. By receiving  all 8 queries in $\mathbf{q}$, the user is  able to decode all non-redundant queries. 
    Similarly, all queries are depicted in Tables \ref{tab:exmplcompleter1}, \ref{tab:exmplcompleter2}, and \ref{tab:exmplcompleter34}.

  \textbf{Step 7: Shuffling. } The order of queries to each server and also the order of the symbols appearing in each query are shuffled, each uniformly at random, to avoid the information leakage from the query orders and  the symbol orders. 

    \begin{remark} [Rate calculation]
        After Step $6$, there are $3$ queries in each stage of round $1$, $8$ in each stage of round $2$, $7$ in each stage of round $3$, and $2$ in each stage of round $4$, summing to the total of $184$ symbols. Since $L=68$, the proposed scheme achieves the rate $R_2=0.74$, while the baseline scheme achieves $R_1=0.61$. 
    \end{remark}

    \begin{remark} [Privacy]
        Intuitively, the privacy of the proposed scheme follows from the fact that the scheme yields symmetric queries to each server. In every stage, all possible $i$-sums appear, and   from the view point of each message, the index structure is symmetric. Besides, using the multiplicative variables $\sigma_i$, we prove in Appendix \ref{sec:privacy} 
        that the symbols signs appeared in each query have a one to one mapping for different sets of demanded messages; keeping the demanded messages hidden from the viewpoint of each server.
    \end{remark}

    \begin{remark}[Outline of the proposed scheme]
        After the initialization steps (Steps 1-3), 
        the proposed scheme  in Step~4 designs the queries similar to the delivery phase of coded caching in terms of designing the indices of symbols (whose detailed explanation will be provided in  Lemma~\ref{lem: index}). Then using the sign assignment strategy  in Step 5, 
        we let some transmitted messages be linear combinations of others, such that  
        this redundancy could be removed by using an MDS matrix in Step 6. As a result, the number of transmissions is reduced.
     Note that Step 2 is designed such that the number of side information queries needed is satisfied. 
    \end{remark}

    \begin{remark} (Why $\alpha_5=0$)
       Note that in each informative query (i.e., the query which contains  symbols  from the demanded messages), 
        there exists only one new symbol  from all the demanded messages, which has not been decoded. 
       In this example, since there are totally $5$ messages,    $2$ of which are demanded, summation of $5$ symbols (each from a different message) has two demanded symbols and cannot contribute to decoding any demanded symbols. 
       This is the reason the scheme continues till round $4$. 
    \end{remark}

    \begin{remark} (Number of stages)
        The last round is round $4$ with $1$ stage. These queries are of the form $a_*+c_*+d_*+e_*$ where $c_*+d_*+e_*$ is treated as  side information, or of the form $b_*+c_*+d_*+e_*$ where again $c_*+d_*+e_*$ is treated as  side information, or of the form $a_*+b_*+\{c_*+d_*, c_*+e_*, d_*+e_*\}$ where $\{c_*+d_*, c_*+e_*, d_*+e_*\}$ are treated as  side information. Based on this observation, a stage of round $4$ needs $2$ stages of round $3$ and $1$ stage of round $2$, to get the side information. Similarly, for a stage of round $3$, the queries are of the form $a_*+\{c_*+d_*, c_*+e_*, d_*+e_*\}$ where $\{c_*+d_*, c_*+e_*, d_*+e_*\}$ are treated as side information, or of the form $b_*+\{c_*+d_*, c_*+e_*, d_*+e_*\}$ where  $\{c_*+d_*, c_*+e_*, d_*+e_*\}$ are treated as side information, or of the form $a_*+b_*+\{c_*,d_*,e_*\}$ where $\{c_*,d_*,e_*\}$ are treated as side information. Based on this observation, a stage of round $3$ needs $2$ stages of round $2$ and $1$ stage of round $1$, to get the side information. By a similar argument, round $2$ needs $2$ stages of round $1$ to get the side information. Considering all this together, the number of stages in each round is determined as $\alpha_5=0, \alpha_4=1, \alpha_3=2, \alpha_2=5, \alpha_1=12$. For instance, for the $12$ stages of round $1$, $10$ of them are used as side information in   round $2$, which has $5$ stages and each of which needs $2$ stages of round $1$ as side information;   the remaining $2$ are used in  round $3$ with $2$ stages, since each of which needs $1$ stage of round $1$ as side information.
    \end{remark}


  \begin{table}
		\centering
		\begin{tabular}{| c | c | c | c |}
			\hline
			Round & Stage & Server $1$ & Server $2$ \\
			\hline
			\multirow{50}{*}{round $1$}
			& stage 1 
                & 
                $
 				\mathbf{G}_{3 \times 5}^{(1,1)} \times \begin{bmatrix}
 					a_1 \\
 					b_1 \\
 					c_1 \\
                        d_1 \\
 					e_1
 				\end{bmatrix}
 			$
                &
                $
 				\mathbf{G}_{3 \times 5}^{(1,1)}  \times \begin{bmatrix}
 					a_{13} \\
 					b_{13} \\
 					c_{13} \\
                        d_{13} \\
 					e_{13}
 				\end{bmatrix}
 			$ \\
			\cline{2-4}
			& stage 2 
                &
                $
 				\mathbf{G}_{3 \times 5}^{(1,2)}  \times \begin{bmatrix}
 					a_{2} \\
 					b_{2} \\
 					c_{2} \\
                        d_{2} \\
 					e_{2}
 				\end{bmatrix}
 			$
                &
                $
 				\mathbf{G}_{3 \times 5}^{(1,2)}  \times \begin{bmatrix}
 					a_{14} \\
 					b_{14} \\
 					c_{14} \\
                        d_{14} \\
 					e_{14}
 				\end{bmatrix}
 			$ \\
			\cline{2-4}
			& stage 3 
                &
                $
 				\mathbf{G}_{3 \times 5}^{(1,3)}  \times \begin{bmatrix}
 					a_{3} \\
 					b_{3} \\
 					c_{3} \\
                        d_{3} \\
 					e_{3}
 				\end{bmatrix}
 			$
                &
                $
 				\mathbf{G}_{3 \times 5}^{(1,3)}  \times \begin{bmatrix}
 					a_{15} \\
 					b_{15} \\
 					c_{15} \\
                        d_{15} \\
 					e_{15}
 				\end{bmatrix}
 			$ \\
			\cline{2-4}
			& stage 4 
                &
                $
 				\mathbf{G}_{3 \times 5}^{(1,4)}  \times \begin{bmatrix}
 					a_{4} \\
 					b_{4} \\
 					c_{4} \\
                        d_{4} \\
 					e_{4}
 				\end{bmatrix}
 			$
                &
                $
 				\mathbf{G}_{3 \times 5}^{(1,4)}  \times \begin{bmatrix}
 					a_{16} \\
 					b_{16} \\
 					c_{16} \\
                        d_{16} \\
 					e_{16}
 				\end{bmatrix}
 			$ \\
			\cline{2-4}
			& stage 5 
                &
                $
 				\mathbf{G}_{3 \times 5}^{(1,5)}  \times \begin{bmatrix}
 					a_{5} \\
 					b_{5} \\
 					c_{5} \\
                        d_{5} \\
 					e_{5}
 				\end{bmatrix}
 			$
                &
                $
 				\mathbf{G}_{3 \times 5}^{(1,5)}  \times \begin{bmatrix}
 					a_{17} \\
 					b_{17} \\
 					c_{17} \\
                        d_{17} \\
 					e_{17}
 				\end{bmatrix}
 			$ \\
			\cline{2-4}
			& stage 6 
                &
                $
 				\mathbf{G}_{3 \times 5} ^{(1,6)} \times \begin{bmatrix}
 					a_{6} \\
 					b_{6} \\
 					c_{6} \\
                        d_{6} \\
 					e_{6}
 				\end{bmatrix}
 			$
                &
                $
 				\mathbf{G}_{3 \times 5}^{(1,6)}  \times \begin{bmatrix}
 					a_{18} \\
 					b_{18} \\
 					c_{18} \\
                        d_{18} \\
 					e_{18}
 				\end{bmatrix}
 			$ \\
			\cline{2-4}
			& stage 7 
                &
                $
 				\mathbf{G}_{3 \times 5}^{(1,7)}  \times \begin{bmatrix}
 					a_{7} \\
 					b_{7} \\
 					c_{7} \\
                        d_{7} \\
 					e_{7}
 				\end{bmatrix}
 			$
                &
                $
 				\mathbf{G}_{3 \times 5}^{(1,7)}  \times \begin{bmatrix}
 					a_{19} \\
 					b_{19} \\
 					c_{19} \\
                        d_{19} \\
 					e_{19}
 				\end{bmatrix}
 			$ \\
			\cline{2-4}
			& stage 8 
                &
                $
 				\mathbf{G}_{3 \times 5}^{(1,8)}  \times \begin{bmatrix}
 					a_{8} \\
 					b_{8} \\
 					c_{8} \\
                        d_{8} \\
 					e_{8}
 				\end{bmatrix}
 			$
                &
                $
 				\mathbf{G}_{3 \times 5}^{(1,8)}  \times \begin{bmatrix}
 					a_{20} \\
 					b_{20} \\
 					c_{20} \\
                        d_{20} \\
 					e_{20}
 				\end{bmatrix}
 			$ \\
			\cline{2-4}
			& stage 9 
                &
                $
 				\mathbf{G}_{3 \times 5}^{(1,9)}  \times \begin{bmatrix}
 					a_{9} \\
 					b_{9} \\
 					c_{9} \\
                        d_{9} \\
 					e_{9}
 				\end{bmatrix}
 			$
                &
                $
 				\mathbf{G}_{3 \times 5}^{(1,9)}  \times \begin{bmatrix}
 					a_{21} \\
 					b_{21} \\
 					c_{21} \\
                        d_{21} \\
 					e_{21}
 				\end{bmatrix}
 			$ \\
			\cline{2-4}
			& stage 10 
                &
                $
 				\mathbf{G}_{3 \times 5}^{(1,10)}  \times \begin{bmatrix}
 					a_{10} \\
 					b_{10} \\
 					c_{10} \\
                        d_{10} \\
 					e_{10}
 				\end{bmatrix}
 			$
                &
                $
 				\mathbf{G}_{3 \times 5}^{(1,10)}  \times \begin{bmatrix}
 					a_{22} \\
 					b_{22} \\
 					c_{22} \\
                        d_{22} \\
 					e_{22}
 				\end{bmatrix}
 			$ \\
			\cline{2-4}
			& stage 11 
                &
                $
 				\mathbf{G}_{3 \times 5}^{(1,11)}  \times \begin{bmatrix}
 					a_{11} \\
 					b_{11} \\
 					c_{11} \\
                        d_{11} \\
 					e_{11}
 				\end{bmatrix}
 			$
                &
                $
 				\mathbf{G}_{3 \times 5}^{(1,11)}  \times \begin{bmatrix}
 					a_{23} \\
 					b_{23} \\
 					c_{23} \\
                        d_{23} \\
 					e_{23}
 				\end{bmatrix}
 			$ \\
			\cline{2-4}
			& stage 12 
                &
                $
 				\mathbf{G}_{3 \times 5}^{(1,12)}  \times \begin{bmatrix}
 					a_{12} \\
 					b_{12} \\
 					c_{12} \\
                        d_{12} \\
 					e_{12}
 				\end{bmatrix}
 			$
                &
                $
 				\mathbf{G}_{3 \times 5}^{(1,12)}  \times \begin{bmatrix}
 					a_{24} \\
 					b_{24} \\
 					c_{24} \\
                        d_{24} \\
 					e_{24}
 				\end{bmatrix}
 			$ \\
			\hline
		\end{tabular}
		\caption{Round $1$ of queries.}
  \label{tab:exmplcompleter1}
	\end{table}

 \begin{table}
		\centering
            \resizebox{\columnwidth}{!}{%
		\begin{tabular}{| c | c | c | c |}
			\hline
			Round & Stage & Server $1$ & Server $2$ \\
			\hline
			\multirow{41}{*}{round $2$}
			& \multirow{1}{*}{stage $1$}
                &
                $
 				\mathbf{G}_{8 \times 10}^{(2,1)}  \times \begin{bmatrix}
 					a_{25}-c_{13} \\
 					a_{26}+d_{13} \\
 					a_{27}+e_{13} \\
                        b_{25}-c_{14} \\
 					b_{26}+d_{14} \\
                        b_{27}+e_{14} \\
                        a_{14}-b_{13} \\
                        c_{26}+d_{25} \\
                        c_{27}+e_{25} \\
                        d_{27}-e_{26}
 				\end{bmatrix}
 			$
                &
                $
 				\mathbf{G}_{8 \times 10}^{(2,1)}  \times \begin{bmatrix}
 					a_{28}-c_{1} \\
 					a_{29}+d_{1} \\
 					a_{30}+e_{1} \\
                        b_{28}-c_{2} \\
 					a_{29}+d_{2} \\
                        b_{30}+e_{2} \\
                        a_{2}-b_{1} \\
                        c_{29}+d_{28} \\
                        c_{30}+e_{28} \\
                        d_{30}-e_{29}
 				\end{bmatrix}
 			$ \\
			\cline{2-4}
			& \multirow{1}{*}{stage $2$}
                &
                $
 				\mathbf{G}_{8 \times 10}^{(2,2)} \times \begin{bmatrix}
 					a_{31}-c_{15} \\
 					a_{32}+d_{15} \\
 					a_{33}+e_{15} \\
                        b_{31}-c_{16} \\
 					b_{32}+d_{16} \\
                        b_{33}+e_{16} \\
                        a_{16}-b_{15} \\
                        c_{32}+d_{31} \\
                        c_{33}+e_{31} \\
                        d_{33}-e_{32}
 				\end{bmatrix}
 			$
                &
                $
 				\mathbf{G}_{8 \times 10}^{(2,2)} \times \begin{bmatrix}
 					a_{34}-c_{3} \\
 					a_{35}+d_{3} \\
 					a_{36}+e_{3} \\
                        b_{34}-c_{4} \\
 					a_{35}+d_{4} \\
                        b_{36}+e_{4} \\
                        a_{4}-b_{3} \\
                        c_{35}+d_{34} \\
                        c_{36}+e_{34} \\
                        d_{36}-e_{35}
 				\end{bmatrix}
 			$ \\
			\cline{2-4}
			& \multirow{1}{*}{stage $3$}
                &
                $
 				\mathbf{G}_{8 \times 10}^{(2,3)} \times \begin{bmatrix}
 					a_{37}-c_{17} \\
 					a_{38}+d_{17} \\
 					a_{39}+e_{17} \\
                        b_{37}-c_{18} \\
 					b_{38}+d_{18} \\
                        b_{39}+e_{18} \\
                        a_{18}-b_{17} \\
                        c_{38}+d_{37} \\
                        c_{39}+e_{37} \\
                        d_{39}-e_{38}
 				\end{bmatrix}
 			$
                &
                $
 				\mathbf{G}_{8 \times 10}^{(2,3)} \times \begin{bmatrix}
 					a_{40}-c_{5} \\
 					a_{41}+d_{5} \\
 					a_{42}+e_{5} \\
                        b_{40}-c_{6} \\
 					a_{41}+d_{6} \\
                        b_{42}+e_{6} \\
                        a_{6}-b_{5} \\
                        c_{41}+d_{40} \\
                        c_{42}+e_{40} \\
                        d_{42}-e_{41}
 				\end{bmatrix}
 			$ \\
			\cline{2-4}
			& \multirow{1}{*}{stage $4$}
                &
                $
 				\mathbf{G}_{8 \times 10}^{(2,4)} \times \begin{bmatrix}
 					a_{43}-c_{19} \\
 					a_{44}+d_{19} \\
 					a_{45}+e_{19} \\
                        b_{43}-c_{20} \\
 					b_{44}+d_{20} \\
                        a_{20}+b_{19} \\
                        a_{6}-b_{5} \\
                        c_{44}+d_{43} \\
                        c_{45}+e_{43} \\
                        d_{45}-e_{44}
 				\end{bmatrix}
 			$
                &
                $
 				\mathbf{G}_{8 \times 10}^{(2,4)} \times \begin{bmatrix}
 					a_{46}-c_{7} \\
 					a_{47}+d_{7} \\
 					a_{48}+e_{7} \\
                        b_{46}-c_{8} \\
 					a_{47}+d_{8} \\
                        b_{48}+e_{8} \\
                        a_{8}-b_{7} \\
                        c_{47}+d_{46} \\
                        c_{48}+e_{46} \\
                        d_{48}-e_{47}
 				\end{bmatrix}
 			$ \\
			\cline{2-4}
			& \multirow{1}{*}{stage $5$}
                &
                $
 				\mathbf{G}_{8 \times 10}^{(2,5)} \times \begin{bmatrix}
 					a_{49}-c_{21} \\
 					a_{50}+d_{21} \\
 					a_{51}+e_{21} \\
                        b_{49}-c_{22} \\
 					b_{50}+d_{22} \\
                        b_{51}+e_{22} \\
                        a_{22}-b_{21} \\
                        c_{50}+d_{49} \\
                        c_{51}+e_{49} \\
                        d_{51}-e_{50}
 				\end{bmatrix}
 			$
                &
                $
 				\mathbf{G}_{8 \times 10}^{(2,5)} \times \begin{bmatrix}
 					a_{52}-c_{9} \\
 					a_{53}+d_{9} \\
 					a_{54}+e_{9} \\
                        b_{52}-c_{10} \\
 					a_{53}+d_{10} \\
                        b_{54}+e_{10} \\
                        a_{10}-b_{9} \\
                        c_{53}+d_{52} \\
                        c_{54}+e_{52} \\
                        d_{54}-e_{53}
 				\end{bmatrix}
 			$ \\
			\hline
		\end{tabular} }
		\caption{Round 2 of queries.}
  \label{tab:exmplcompleter2}
	\end{table}

        \begin{table} 
		\centering
            \resizebox{\columnwidth}{!}{%
		\begin{tabular}{| c | c | c | c |}
			\hline
                Round & Stage & Server $1$ & Server $2$ \\
                \hline
			\multirow{11}{*}{round $3$}
			& \multirow{1}{*}{stage $1$}
			& $\mathbf{G}_{7 \times 10}^{(3,1)} \times \begin{bmatrix}
                    a_{55}-c_{29}-d_{28} \\
			      a_{56}-c_{30}-e_{28} \\
			      a_{57}-d_{30}+e_{29} \\
			      b_{55}-c_{35}-d_{34} \\
			    b_{56}-c_{36}-e_{34} \\
			    b_{57}-d_{36}+e_{35} \\
			        a_{34}-b_{28}+c_{23} \\
			    a_{35}-b_{29}-d_{23} \\
			    a_{36}-b_{30}-e_{23} \\
			    c_{57}-d_{56}+e_{55}
                    \end{bmatrix} $
                &
                $\mathbf{G}_{7 \times 10}^{(3,1)} \times \begin{bmatrix}
                    a_{58}-c_{26}-d_{25} \\
			      a_{59}-c_{27}-e_{25} \\
			      a_{60}-d_{27}+e_{26} \\
			      b_{58}-c_{32}-d_{31} \\
			    b_{59}-c_{33}-e_{31} \\
			    b_{60}-d_{33}+e_{32} \\
			        a_{31}-b_{25}+c_{11} \\
			    a_{32}-b_{26}-d_{11} \\
			    a_{33}-b_{27}-e_{11} \\
			    c_{60}-d_{59}+e_{58}
                    \end{bmatrix} $ \\
			\cline{2-4}
			& \multirow{1}{*}{stage $2$}
			& $\mathbf{G}_{7 \times 10}^{(3,2)} \times \begin{bmatrix}
                  a_{61}-c_{41}-d_{40} \\
			     a_{62}-c_{42}-e_{40} \\
			   a_{63}-d_{42}+e_{41} \\
			   b_{61}-c_{47}-d_{46} \\
			   b_{62}-c_{48}-e_{46} \\
			   b_{63}-d_{48}+e_{47} \\
			   a_{46}-b_{40}+c_{24} \\
			   a_{47}-b_{41}-d_{24} \\
			   a_{48}-b_{42}-e_{24} \\
			   c_{63}-d_{62}+e_{61}
                  \end{bmatrix} $
                &
                $\mathbf{G}_{7 \times 10}^{(3,2)} \times \begin{bmatrix}
                  a_{64}-c_{38}-d_{37} \\
			     a_{65}-c_{39}-e_{37} \\
			   a_{66}-d_{39}+e_{38} \\
			   b_{64}-c_{44}-d_{43} \\
			   b_{65}-c_{45}-e_{43} \\
			   b_{66}-d_{45}+e_{44} \\
			   a_{43}-b_{37}+c_{12} \\
			   a_{44}-b_{38}-d_{12} \\
			   a_{45}-b_{39}-e_{12} \\
			   c_{66}-d_{65}+e_{64}
                  \end{bmatrix} $ \\
			\hline
			\multirow{1}{*}{round $4$}
			& \multirow{1}{*}{stage $1$}
			&   $\mathbf{G}_{2 \times 5}^{(3,2)} \times \begin{bmatrix}
                    a_{67}-c_{60}+d_{59}-e_{58} \\
			    b_{67}-c_{66}+d_{65}-e_{64} \\
			    a_{64}-b_{58}+c_{53}+d_{52} \\
			    a_{65}-b_{59}+c_{54}+e_{52} \\
			    a_{66}-b_{60}+d_{54}-e_{53} 	
                    \end{bmatrix} $
                &
                    $\mathbf{G}_{2 \times 5}^{(3,2)} \times \begin{bmatrix}
                    a_{68}-c_{57}+d_{56}-e_{55} \\
			    b_{68}-c_{63}+d_{62}-e_{61} \\
			    a_{61}-b_{55}+c_{50}+d_{49} \\
			    a_{62}-b_{56}+c_{51}+e_{49} \\
			    a_{63}-b_{57}+d_{51}-e_{50}
                    \end{bmatrix} $ \\
			\hline			
		\end{tabular} }
	\caption{Rounds 3 and 4 of queries.}
 \label{tab:exmplcompleter34} 
\end{table}

	\section{New Proposed MM-PC Scheme: The General Case} \label{sec:achievable}
       In this section, following the main idea of the example in Section~\ref{sec:exp achievable}, we  describe the general  MM-PC scheme proposed in this paper. Note that each message is divided into $L$ symbols. The $j^{\text{th}}$ symbol of $W_i$ is denoted by $W_i(j)$. The proofs of decodability and privacy of the proposed scheme are provided in Appendices \ref{sec:decodability} and \ref{sec:privacy}, respectively. 

        \textbf{Step 1: Permutation and Relabeling. } In this step, the symbols in each message are permuted by a single permutation function $\pi(\cdot)$ over $[L]$ and multiplied by the multiplicative variable $\sigma_i \in \{+1,-1\}$ for the symbol index $i\in [L]$. We denote the alternated message of $W_m$ by $u_m$ as follows.
	\begin{align}
		u_m(i) := \sigma_i W_m(\pi(i)), m \in [M], i \in [L].
  \label{eq:i the symbol +-1}
	\end{align}
	Both the permutation function $\pi$ and the multiplicative variables $\sigma_i$ are uniformly and independently distributed. Notice that these functions are independent of message label $m\in [M]$.

        Furthermore, we change the initial labeling of the messages such that the first $P$ labels are the demanded messages; i.e., $(\theta_1,\theta_2,...,\theta_P)=(1,2,...,P)$. We expand the new basis with $K-P$ more independent messages with the new labels from $P+1$ to $K$, and then label the others (which are the new dependent ones) from $K+1$ to $M$. Notice that this is possible with the assumption that the demanded messages are independent. This relabeling (or permutation on messages) is done privately by the user and unknown to the servers.

	\textbf{Step 2: Number of Stages.} The main idea of this step  is inspired from the second MM-PIR scheme in~\cite{banawan2018multi} (i.e., for case  $P \leq \frac{M}{2}$).  
    The query structure to each server is split into $M-P+1$ \textit{rounds}, where each  round $i$ contains the queries summing $i$ different symbols. Each round may also be split into multiple \textit{stages}. Each stage of round $i$ queries contains all $\binom{M}{i}$ possible choices of messages; i.e., summations with the form $u_{j_1}(*)+u_{j_2}(*)+...+u_{j_i}(*), \forall \{j_1,j_2,...,j_i\} \subset [M]$. The symbol indices $*$ will be carefully chosen, explained in the index assignment step. In each round, the number of stages will be determined as follows. 
    Consider a stage of round $i$ queries to server $1$. The queries are partitioned based on the number of symbols from the demanded messages involved. For the queries containing only $1$ symbol from the demanded messages,  there are $\binom{P}{1}=P$ types; for each type, one stage of round $i-1$ is needed to provide the side information part. Note that these $P$ stages of round $i-1$ (used for providing side information) are from the other $N-1$ servers, i.e., servers $2$ to $N$, for the sake of privacy. 
   Generally,  for the queries containing $i_1 \in [\min \{i, P\}]$ symbols from the demanded messages,  there are $\binom{P}{i_1}$ types; for each type, one stage of round $i-i_1$ is needed from the other $N-1$ servers to provide the side information part. 

    The number of stages in each round $j$ queries to each server is denoted by $\alpha_j$, for $j\in [M-P+1]$. The number of  stages of round $j$ to servers $2$ to $N$ would be $(N-1)\alpha_j$. $\binom{P}{1} \alpha_{j+1}$ of these stages will be  used as the side information in $\alpha_{j+1}$ stages of round $j+1$ queries to server $1$. $\binom{P}{2} \alpha_{j+2}$ of these stages will be used as the side information in $\alpha_{j+2}$ stages of round $j+2$ queries to server $1$, and so on, leading to the equation \eqref{eq:stages}. Furthermore, as  seen in the example, only queries containing one symbol from the demanded messages contribute to decoding new demanded symbols. 
    Thus, after round $M-P+1$, since each query would have at least two symbols from the demanded messages, the scheme is designed to continue only until round $M-P+1$; $\alpha_j=0, \forall j \in [M-P+2:M]$. 

	After the general structure of the queries is set, the next step would be to determine which indices should be used for the symbols in each query. 
 
	\textbf{Step 3: Initialization. } In this step, the queries of round~$1$ (single symbols) are downloaded from the servers. Let \texttt{new}($u_m$) be a function that starting from $u_m(1)$, returns the next symbol index of $u_m$ each time it is called, i.e., the first time the function \texttt{new}($u_m$) is called, it returns $u_m(1)$, next time it returns $u_m(2)$ and so on. Starting from server~$1$, the functions $\texttt{new}(u_1), \dots, \texttt{new}(u_M)$ are called as the queries to the server. This is one stage of round $1$ and should be repeated $\alpha_1$ times in total for each server. 
	
	
	
	To determine the indices of queries in rounds $2$ to $M-P+1$, we have the following step. Notice that the ultimate goal of  index assignment, is to exploit the redundancy between messages and reduce the total number of queries, hence increasing the rate. 
	
	\textbf{Step 4: Index assignment. } 
 The indexing structure follows the following lemma.
\begin{lemma}[Index structure]
\label{lem: index}
    In a stage in round $i$, for  any set of $i-1$ messages (assumed to be $\{u_1, u_2, \dots, u_{i-1}\}$) and  any other two messages (assumed to be $u_{i_1}, u_{i_2}$), in the queries with the form   $u_{i_1}(k_1)+u_1(*)+\dots +u_{i-1}(*)$ and  $u_{i_2}(k_2)+u_1(*)+\dots +u_{i-1}(*)$, it should have  $k_1=k_2$ (i.e., the symbol indices of $u_{i_1}, u_{i_2}$ in the two queries are the same.). 
\end{lemma}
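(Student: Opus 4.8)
The plan is to recast the index-assignment rule of Step~4 as the symmetric labeling of the Maddah-Ali--Niesen delivery phase~\cite{maddah2014fundamental} and to prove the claim by induction on the round index. Fix one server and one stage of round~$i$, and for a message support $S\subseteq[M]$ with $|S|=i$ write $\mathrm{ind}(j,S)$ for the symbol index carried by $u_j$ inside the (within this stage, unique) query whose support is $S$, where $j\in S$. In this notation the lemma is the instance $S_1=\{u_1,\dots,u_{i-1},u_{i_1}\}$, $S_2=\{u_1,\dots,u_{i-1},u_{i_2}\}$ of the stronger statement I would actually prove: there is a function $g_i$ on $(i-1)$-subsets of $[M]$ with $\mathrm{ind}(j,S)=g_i(S\setminus\{j\})$ for all $|S|=i$ and all $j\in S$. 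Granting this, $k_1=\mathrm{ind}(i_1,S_1)=g_i(\{u_1,\dots,u_{i-1}\})=\mathrm{ind}(i_2,S_2)=k_2$, which is exactly the assertion. Conceptually the content is that the index of the symbol completing an $(i-1)$-subset to an $i$-subset depends only on that $(i-1)$-subset and not on the completing message, i.e.\ the multicast symmetry of coded caching. Note that the permutation and sign flips of Step~1 act identically on every message, so they leave this index structure intact.

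First I would settle the base case $i=1$: by Step~3 the single-symbol queries of a stage are emitted by calling \texttt{new}($u_1$), $\dots$, \texttt{new}($u_M$) in lockstep, so they share one common index, and setting $g_1(\emptyset)$ equal to that index verifies the invariant and the lemma for round~$1$. For the inductive step I would assume the invariant for round $i-1$ and sort the round-$i$ queries of the stage by how many of their symbols belong to the demanded set $\{u_1,\dots,u_P\}$. A query with exactly one demanded symbol receives a fresh index for that symbol, while its $i-1$ side-information symbols are copied verbatim from a round-$(i-1)$ query of another server; by the induction hypothesis those copied indices are values of $g_{i-1}$, and the fresh index is the one shared by all informative queries having the same side set, which is precisely what $g_i$ must assign to that $(i-1)$-subset. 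A query with two or more demanded symbols is filled in by the matching rule of Step~4: to index $u_j$ in the query for $S$, locate an already-built query whose support contains $S\setminus\{j\}$ and copy the index of its single symbol lying outside $S\setminus\{j\}$. This rule is verbatim the requirement $\mathrm{ind}(j,S)=g_i(S\setminus\{j\})$.

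The main obstacle is to show that these rules are globally consistent, i.e.\ that the matching step never forces two different indices onto the same symbol and that the side information copied from round $i-1$ is compatible with the fresh-index bookkeeping. This is exactly where the coded-caching combinatorics does the work. I would make explicit the correspondence sending the symbol of $u_j$ in the round-$i$ query for $S$ to the Maddah-Ali--Niesen subfile cached exactly by the user set $S\setminus\{j\}$, which maps these queries bijectively onto the multicast messages of the scheme with $M$ users and cache parameter $t=i-1$; under this correspondence Step~4's side-information copying becomes the rule that the subfile of user $u_j$ appearing in the message for $S$ is the one cached by $S\setminus\{j\}$. Since in that scheme every subfile is globally labeled by its caching set alone, a consistent $g_i(\cdot)$ exists, the matching constraints are automatically non-conflicting, and the deterministic procedure of Step~4 must reproduce this labeling. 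The remaining checks are routine: the fixed index offsets between the two servers and across the $\alpha_i$ stages are independent of the completing message, so they preserve the ``depends only on the complement'' property, and the lemma follows in full generality.
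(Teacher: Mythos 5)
Your proof is correct, but it is worth flagging that it is substantially more formal than what the paper itself offers: the paper never proves Lemma~\ref{lem: index} as a standalone statement. Instead it treats the lemma as a design specification that Steps~4.1--4.3 are built to satisfy --- the informative queries receive matching fresh indices because the \texttt{new}$(\cdot)$ counters of all demanded messages stay synchronized and the side sets are swept in the same lexicographic order for every $\theta\in[P]$, while the side-information and useless queries have their indices \emph{defined} by copying ``according to the lemma'' --- and the identification with the Maddah-Ali--Niesen delivery phase with $t=i-1$ is asserted only later, in the proof of Lemma~\ref{lem:redundancy}. What your route adds is (i) the explicit invariant $\mathrm{ind}(j,S)=g_i(S\setminus\{j\})$, which is the right formalization of ``the index depends only on the complementary $(i-1)$-subset,'' and (ii) the induction on the round index, which is precisely what discharges the well-definedness issue the paper glosses over: Steps~4.2 and 4.3 allow the index to be copied from \emph{any} informative query containing the relevant $i-1$ messages, and one must check that all admissible choices agree; in your framework this agreement follows from the induction hypothesis applied to the round-$(i-1)$ stage that supplies the side information (all copied side-information indices there are values of $g_{i-1}$ of the appropriate complement, hence independent of the completing message). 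The one weak link is the sentence claiming that, because the coded-caching scheme admits a globally consistent labeling, ``the deterministic procedure of Step~4 must reproduce this labeling'': existence of a consistent labeling does not by itself force a given procedure to output it. This is harmless here, since your middle paragraph (synchronized counters for fresh indices, plus the inductive argument for copied indices) is what actually establishes consistency; the Maddah-Ali--Niesen correspondence then serves as interpretation rather than as a load-bearing step, which is also essentially the role it plays in the paper.
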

 Note that the indexing structure in Lemma~\ref{lem: index} is inspired from  the delivery phase of the seminal coded caching scheme in~\cite{maddah2014fundamental}.

      Our objective is to design the index assignment satisfying Lemma \ref{lem: index}.  
    To accomplish this, we  divide the queries in each stage into three groups: (1) informative queries,  
      (2) side information queries, and (3) useless queries. The description of the design comes as follows. 
	
	\textbf{4.1: Informative Queries. } 
    These queries are used to decode new symbols of demanded messages. Each informative query only contains one symbol from demanded messages, which is added to some side information obtained from the previous round, and thus can be decoded using this side information. Formally, these queries for round $i$ are with the form $q_{\mathcal{\gamma}}=u_{\theta}(*)+u_{j_1}(*)+...+u_{j_{i-1}}(*)$ where $\theta \in [P]$ and $\{j_1,...,j_{i-1}\}\subseteq  [M]\setminus [P]$  and $\gamma$ denotes the set of message indices, i.e. $\gamma=\{\theta, j_1,...,j_{i-1}\}$. The part $u_{j_1}(*)+...+u_{j_{i-1}}(*)$   is treated as  side information directly obtained from some stage in round $i-1$  dedicated for the usage of side information for $u_{\theta}$. The symbol $u_{\theta}(*)$ is a previously not-decoded symbol for $u_{\theta}$, i.e. $\texttt{new}(u_{\theta})$.  So the queries involving $u_{\theta}$ is the set $\{q_{\theta \cup \gamma'}=\texttt{new}(u_{\theta})+u_{j_1}(*)+...+u_{j_{i-1}}(*): \forall \gamma'=\{j_1,...,j_{i-1}\}\subseteq  [M]\setminus [P]\}$ where we choose $\gamma'$ in a lexicographic order.  
    By the structure of queries, in each stage of round $i$, $\binom{M-P}{i-1}$ new symbols of each demanded message is decoded,   which equals the number of ways of choosing the set $\{j_1,...,j_{i-1}\}\subseteq  [M]\setminus [P]$. 
      Note that for any given $\gamma'=\{j_1,...,j_{i-1}\}\in [M]\setminus [P]$, in the set of queries $\{q_{\theta \cup \gamma'}=u_{\theta}(*)+u_{j_1}(*)+...+u_{j_{i-1}}(*): \forall \theta \in [P]\}$, all $u_{\theta}(*)$ where $\theta \in [P]$ have the same index, since for each $u_{\theta}$ the queries have been built on the lexicographic order of $\gamma'$, consequently satisfying the index structure in Lemma \ref{lem: index}.  
  
  Let us go back to  the  example  in Section~\ref{sec:exp achievable}. In one stage of round $2$, we first determine the query $a_*+c_*$, then $a_*+d_*$, and then $a_*+e_*$.  These queries would be $a_{25}+c_{13}$, $a_{26}+d_{13}$, and $a_{27}+e_{13}$ for the first stage of round $2$ queries to server $1$, where $a_{25}, a_{26}, a_{27}$ are new symbols of message $a$, and $c_{13}, d_{13}, e_{13}$ have been  downloaded symbols in round~$1$ treated as the side information in round $2$.


    \textbf{4.2: Side Information Queries. } 
    These queries do not contain any symbols from demanded messages. Consider the query $q_{\gamma}=u_{j_1}(*)+\dots+u_{j_i}(*)$ in round $i$ where $\gamma=\{j_1,...,j_{i}\} \in [M] \setminus [P]$. To determine the symbol index for $u_{k}$ where $k\in \gamma$, by Lemma \ref{lem: index}, 
    the index of this symbol should be determined by any informative query  
in the same stage (determined in Step $4.1$) containing symbols of messages $\gamma \setminus \{k\}$, $q_{\theta \cup \gamma \setminus \{k\}}$ for any $\theta \in [P]$; i.e., the index should be the same as the symbol index of the  demanded message $u_{\theta}$ in $q_{\theta \cup \gamma \setminus \{k\}}$.
 By the definition, the number of the side information queries in a stage in round $i$ is $\binom{M-P}{i}$.
 
 Let us go back to  the  example  in Section~\ref{sec:exp achievable}. In the first stage of round~$2$, 
 the symbol indices in $c_*+d_*$  are determined based on the informative queries $a_{25}+c_{13}$ and $a_{26}+d_{13}$. So  $c$ is added to a symbol with index $25$ and $d$ is added to a symbol with index $26$; i.e.,  the  resulting query is  $c_{26}+d_{25}$.  

	\textbf{4.3: Useless Queries. } 
    These queries contain more than one symbol from the demanded messages. Starting with the queries containing two demanded messages,  consider the query $q_{\gamma}=u_{\theta_1}(*)+u_{\theta_2}(*)+u_{j_1}(*)+...+u_{j_{i-2}}(*)$ where $\gamma=\{\theta_1, \theta_2, j_1, \dots, j_{i-2}\}$, and $\theta_1,\theta_2 \in [P], j_1,...j_{i-2} \in [M] \setminus [P]$. The part $u_{j_1}(*)+...+u_{j_{i-2}}(*)$ is a side information obtained from a stage in round $i-2$. Therefore, it  remains to determine the indices of $u_{\theta_1}$ and $u_{\theta_2}$. To determine the index of $u_{\theta_1}$, based on Lemma \ref{lem: index}, 
the index of this symbol should be determined by any informative query  
in the same stage (determined in Step $4.1$) containing symbols of messages $\{u_{\theta_2},u_{j_1},...,u_{j_{i-2}}\}$, $q_{\gamma'}$ where $\gamma'=\theta'_1 \cup \{\theta_2, j_1, \dots, j_{i-2}\}$ for any $\theta'_1 \in [M]\setminus [P]$; i.e., the index should be the same as the symbol index of $u_{\theta_1'}(*)$ in $q_{\gamma'}$.
It is important to note that since $u_{\theta_1'}(*)$ comes from a side information query in a stage of round $i-1$,   $u_{\theta_1}$ with the same symbol index has also appeared in the same stage in an informative query, and thus   has already been decoded there.  
    Consequently, these queries cannot contribute to decoding new symbols for demanded messages, nor serve as side information. 
    We observe that to determine symbol indices containing two symbols from the demanded messages, queries containing one are used. Similarly, to determine symbol indices for queries containing three symbols from demanded messages, queries containing two are used, with a similar process explained. This process continues until all queries in this group have been indexed. 

    Let us go back to  the  example  in Section~\ref{sec:exp achievable}.
    In the first stage of round~$1$,    to determine the indices in $a_*+b_*$, we need to check the queries $a_{25}+c_{13}$ and $b_{25}+c_{14}$. So $a$ is added to a symbol with index $13$ and $b$ to a symbol with index $14$; i.e., the resulting query is  $a_{14}+b_{13}$.

 	As a result, by Step 4, the indices of all the symbols are determined. The next step would be to assign the signs ($+1$ or $-1$) to symbols in the queries, such that there would be some queries being linear combinations of other queries.
  
 	
 	\textbf{Step 5: Sign assignment. }  The sign assignment step, from round $2$ to the last round, includes two sub-steps: (1) choosing between structure plus or minus and (2) performing random sign switching, which are described as follows. 

  \textbf{5.1: Structure Plus/Minus.} Each query is first divided into two parts. The first part contains symbols from independent messages and the second part symbols from dependent messages. So each query $q$ is written as
 	\begin{align} \label{eq:strplus1}
 		q=(\text{independent symbols})\pm(\text{dependent symbols}).
 	\end{align}
 	The sign $+$ is   referred to as \textit{structure plus} and the sign $-$ is referred to as \textit{structure minus}. In round $2$, a structure plus is used in each query. The structure is successively switched for the next rounds, i.e. for round $3$, a structure minus is used in each query; for round $4$, a structure plus is used in each query; and so on. Additionally, in each parenthesis, after ordering the symbols based on the lexicographic order of the corresponding messages, the first symbol is assigned by a plus sign and this successively alternates until the last symbol in the parenthesis. In other words, if the independent symbols in~\eqref{eq:strplus1} are $u_{i_1}(*), u_{i_2}(*),\ldots, u_{i_j}(*)$ where $i_1<i_2<\cdots<i_j$, then $(\text{independent symbols})$ in~\eqref{eq:strplus1} should be 
  \begin{align}
    (\text{independent symbols})=(u_{i_1}(*)-u_{i_2}(*)+u_{i_3}(*)-\cdots).  
  \end{align}
  Similarly, if the dependent symbols in~\eqref{eq:strplus1} are $u_{k_1}(*), u_{k_2}(*),\ldots, u_{k_j}(*)$ where $k_1<k_2<\cdots<k_j$, then $(\text{dependent symbols})$ in~\eqref{eq:strplus1} should be 
  \begin{align}
    (\text{dependent symbols})=(u_{k_1}(*)-u_{k_2}(*)+u_{k_3}(*)-\cdots).  
  \end{align}
  

  \textbf{5.2: Random Sign Switching:} In this step,  each query solely is multiplied by $+1$ or $-1$, uniformly and independently at random.

  \begin{remark}
      As studied in \cite{wan2021optimal} in the cache-aided scalar linear function retrieval problem, in order to reduce the load in the delivery phase of a caching system in which each user requests a linear combination of messages, it is needed that symbols get multiplied by a minus or a plus based on certain rules. For sign assignment,  we are inspired from the sign assignment in~\cite{wan2021optimal}. Particularly, the caching scheme in~\cite{wan2021optimal} always uses the  structure plus between independent symbols and dependent symbols.
      This is natural since they have one stage (and also only one round) of delivery. However, since we have multiple delivery stages and rounds, which are are inter-connected; i.e. a side information query in one stage is used in another stage, to ensure the decodability of the scheme, we have to use the plus and minus structures alternatively in rounds. 
  \end{remark}



 	 \begin{lemma} \label{lem:redundancy}
 		By the end of Step 5, each stage of round $i$ has $\binom{M-K}{i}$  linearly redundant queries from the total $\binom{M}{i}$ queries,  and can be written as linear combinations of the others. Linearly redundant queries are those which do not contain any symbols from independent messages.
 	\end{lemma}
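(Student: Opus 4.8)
The plan is to reduce the claim to a finite linear-algebra identity over $\mathbb{F}_q$ and then to exploit the index structure of Lemma~\ref{lem: index} to keep the bookkeeping tractable. First I would fix a single stage of round $i$ and record two reductions. \emph{(i)} Because the permutation $\pi$ and the factors $\sigma_i$ of Step~1 act identically on every message, the linear relations among messages survive symbol by symbol: writing $\mathcal{K}=[K]$ for the independent labels and $\mathcal{D}=[K+1:M]$ for the dependent ones, every $u_m$ expands as $u_m(j)=\sum_{k\in\mathcal{K}} c_{m,k}\,u_k(j)$ for all $j$, where $c_{m,k}=\delta_{m,k}$ when $m\in\mathcal{K}$. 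Since the independent symbols $\{u_k(j):k\in\mathcal{K}\}$ are linearly independent, a query is ``linearly redundant'' precisely when, as an $\mathbb{F}_q$-linear form in these symbols, it lies in the span of the remaining queries. \emph{(ii)} The random sign switching of Step~5.2 multiplies each query by a nonzero scalar and so cannot affect linear (in)dependence; I may ignore it. The counting is then immediate: the queries with no independent symbol are exactly the $q_\gamma$ with $\gamma\subseteq\mathcal{D}$, and there are $\binom{|\mathcal{D}|}{i}=\binom{M-K}{i}$ of them, so it remains to place each such query in the span of the others.

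Next I would set up a combinatorial labeling via Lemma~\ref{lem: index}. For each $(i-1)$-subset $\mathcal{B}\subseteq[M]$ the lemma yields a single common index $t(\mathcal{B})$ such that, for every $m\notin\mathcal{B}$, the symbol message $m$ contributes to $q_{\{m\}\cup\mathcal{B}}$ is $u_m(t(\mathcal{B}))$; abbreviate $u_m[\mathcal{B}]:=u_m(t(\mathcal{B}))$, so that $q_\gamma=\sum_{m\in\gamma}s_{\gamma,m}\,u_m[\gamma\setminus\{m\}]$ with structure signs $s_{\gamma,m}\in\{+1,-1\}$ fixed in Step~5.1. The crucial consequence is that when a dependent symbol $u_m[\mathcal{B}]$ is expanded through $c_{m,\cdot}$, each resulting independent term $u_k[\mathcal{B}]$ (with $k\in\mathcal{K}$, and $k\notin\mathcal{B}$ because $\mathcal{B}\subseteq\mathcal{D}$) is the \emph{same} physical symbol occurring in the non-redundant query $q_{\{k\}\cup\mathcal{B}}$. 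Hence each substitution can be traded for an actual query.

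With this in hand I would prove the span claim by coefficient matching. Fixing a redundant $\gamma\subseteq\mathcal{D}$, expand $q_\gamma=\sum_{m\in\gamma}s_{\gamma,m}\sum_{k\in\mathcal{K}}c_{m,k}\,u_k[\gamma\setminus\{m\}]$ in the independent basis and seek $\mu_{\gamma'}\in\mathbb{F}_q$ supported on non-redundant $\gamma'\not\subseteq\mathcal{D}$ with $q_\gamma=\sum_{\gamma'}\mu_{\gamma'}q_{\gamma'}$. Since $u_k[\mathcal{B}]$ appears in $q_{\gamma'}$ only when $\gamma'=\{m'\}\cup\mathcal{B}$ with $m'=\gamma'\setminus\mathcal{B}$, carrying coefficient $s_{\gamma',m'}c_{m',k}$, matching the coefficient of every $u_k[\mathcal{B}]$ turns the statement into a consistent linear system for the $\mu_{\gamma'}$. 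The natural way to solve it is a telescoping peeling mirroring the worked example of Section~\ref{sec:exp achievable}: replace $q_\gamma$ by $q_\gamma-\sum(\text{matched }q_{\{k\}\cup(\gamma\setminus\{m\})})$, note that the residual is again a form supported on independent symbols whose combinatorial indices live in the same stage, and iterate; I would organize this as an induction on $|\gamma\cap\mathcal{D}|$, i.e.\ on the number of dependent messages still carried by the residual.

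The hard part will be the sign bookkeeping, namely proving that every independent symbol $u_k[\mathcal{B}]$ not meant to survive cancels exactly. This is where the alternating structure-plus/minus rule of Step~5.1 (plus for round~$2$, minus for round~$3$, and so on, together with the within-parenthesis alternation starting at $+$) is indispensable, and it is the step the paper itself flags as the most non-trivial. Concretely, a stray $u_k[\mathcal{B}]$ is generated twice, once from expanding a dependent symbol of $q_\gamma$ and once from the matching query $q_{\{k\}\cup\mathcal{B}}$ pulled into the combination, and I must show these always carry opposite structure signs and thus annihilate; this reduces to a parity computation in which the round number controls the inter-group sign while the lexicographic position controls the intra-group sign. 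I would check the base case $i=2$ directly, recovering the example identity $q_{10}=q_1+q_3+q_4-q_5+q_6+q_7-q_8$, and then carry the parity invariant through the inductive peeling. The coefficients $c_{m,k}$ drop out because they are copied from the message definitions and matched index-by-index, so the only genuine content is the sign cancellation, exactly the mechanism underlying the cache-aided scalar linear function retrieval scheme of~\cite{wan2021optimal}.
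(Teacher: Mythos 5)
Your reductions (i)--(ii), the count of the $\binom{M-K}{i}$ all-dependent queries, and the formalization $q_\gamma=\sum_{m\in\gamma}s_{\gamma,m}\,u_m[\gamma\setminus\{m\}]$ via Lemma~\ref{lem: index} are all sound, and your route is genuinely different from the paper's: the paper identifies each stage with the MAN coded-caching delivery phase ($M$ files, $M$ users, independent messages playing the role of leaders and dependent ones of non-leaders) and then \emph{imports} the redundancy result of \cite[Appendix B]{wan2021optimal}, only proving the structure-minus variant by modifying the multicast composition and the decoding coefficients in~\eqref{eq:decoding coefficients}. The problem is that your proposal stops exactly where the content of Lemma~\ref{lem:redundancy} begins. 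The cancellation argument --- that after the peeling every stray independent symbol vanishes or regroups into whole queries --- is never executed; it is only scheduled (``I would check the base case $i=2$ \dots and then carry the parity invariant through''). Since that cancellation \emph{is} the lemma, what you have is a proof plan with its central step missing, not a proof.

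Worse, the plan mischaracterizes what that step requires. Your claim that the coefficients $c_{m,k}$ ``drop out'' so that only a sign-parity computation remains is an artifact of the example, where all dependency coefficients are $0$ or $1$. For general coefficients the multipliers $\mu_{\gamma'}$ are not $\pm1$: already in round $2$, taking $d=2a+3b$ and $e=5b+7c$ (with the example's index assignment, indices suppressed), the peeling yields $q_{\{d,e\}}=2q_{\{a,e\}}+3q_{\{b,e\}}-5q_{\{b,d\}}-7q_{\{c,d\}}+10q_{\{a,b\}}+14q_{\{a,c\}}+21q_{\{b,c\}}$, where $10=c_{d,a}c_{e,b}$, $14=c_{d,a}c_{e,c}$, $21=c_{d,b}c_{e,c}$, and the internal cancellation of the cross term in $b$ is the identity $c_{d,b}c_{e,b}-c_{e,b}c_{d,b}=0$, i.e., a vanishing $2\times2$ determinant with repeated columns --- a coefficient identity, not a parity statement. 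In round $i$ the correct multipliers are (up to sign) determinants of submatrices of the dependency matrix, exactly the $\det\bigl(\mathbb{D}^{\prime}_{\mathcal{A}\setminus\mathcal{S},\mathcal{L}_{\mathcal{S}}}\bigr)$ appearing in the paper's decoding coefficients, and the cancellations your induction must verify are cofactor-expansion identities carried out separately for the structure-plus and structure-minus rounds. So the missing piece is not routine bookkeeping; it is the determinant machinery of \cite[Appendix B]{wan2021optimal} (or an equivalent re-derivation of it), which is precisely what the paper cites instead of re-proving.
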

 The proof of this theorem is given in Appendix~\ref{sec:proof of redundancy}.
 
 	By our construction  up to the end of Step 5, 
  it is important to summarize that in each stage there are two disjoint sets of redundant queries: the set of useless queries and the set of linearly redundant queries. More precisely,
 	\begin{itemize}
 		\item The useless queries are  redundant since they are the summation of some side information and some symbols of demanded messages which are all  previously decoded.
 		\item 
   The set of linearly redundant queries by Lemma \ref{lem:redundancy} are among the side information queries, which are some linear combinations of all remaining queries.   
 	\end{itemize}
  Hence, we can further reduce the amount of download summations by removing the redundancy. 
  However, removing these queries  directly from the set of queries jeopardizes privacy. Step 6 introduces a way to  reduce download  while preserving privacy.
 	
 	For a stage of round $i \in [M-P+1]$, the number of informative queries, side information queries, and useless queries are $n^{(i)}_{iq}=P \binom{M-P}{i-1}, n^{(i)}_{sq}=\binom{M-P}{i}, n^{(i)}_{uq}=\binom{M}{i} - n^{(i)}_{iq} - n^{(i)}_{sq}$, respectively. The number of linearly redundant queries are $n^{(i)}_{rq} = \binom{M-K}{i}$. 
 	
 	\textbf{Step 6: Reducing Download. } For each round $i$ and each stage $s$, if the queries in state $s$ of round $i$ are $q_{1},\ldots,q_{\binom{M}{i}}$, we denote $\mathbf{q}^{(i,s)}=\left[ \begin{array}{c}
q_1\\
q_2\\
\vdots\\
q_{\binom{M}{i}}
\end{array} \right]$. 
 We multiply $\mathbf{q}^{(i,s)}$ on the left by the MDS matrix $\mathbf{G}^{(i,s)}$ of size $r \times \binom{M}{i}$, where $r$ is defined as $r=\binom{M}{i}-n^{(i)}_{uq}-n^{(i)}_{rq}=P\binom{M-P}{i-1}+\binom{M-P}{i}-\binom{M-K}{i}$, to reach the final set of queries in this stage as the elements of $\mathbf{q}^{(i,s)}_f$,
	 \begin{align} \label{eq:redundancy}
		\mathbf{q}^{(i,s)}_f := \mathbf{G}_{r \times \binom{M}{i}}^{(i,s)} \mathbf{q}^{(i,s)}.
	\end{align}
	This is done for all rounds $i$ and stages $s$.
	
	The reason we can decode  all $\binom{M}{i}$ queries in $\mathbf{q}^{(i,s)}$ by $\mathbf{q}^{(i,s)}_f$ is as follows. We first partition $\mathbf{q}^{(i,s)}$ into three parts as
	\begin{align}
		\mathbf{q}^{(i,s)} = \begin{bmatrix}
												\mathbf{q}^{(i,s)}_{1} \\
												\mathbf{q}^{(i,s)}_{2} \\
												\mathbf{q}^{(i,s)}_{3}
											\end{bmatrix},
	\end{align}
	where $\mathbf{q}^{(i,s)}_{3}, \mathbf{q}^{(i,s)}_{2}, \mathbf{q}^{(i,s)}_{1}$ represent linearly redundant queries, useless queries, and other queries, respectively. Since $\mathbf{q}^{(i,s)}_{3}$ is a linear combination of the other two, there exists a full rank matrix $\mathbf{G}'$ such that
	\begin{align}
		\mathbf{q}^{(i,s)} = \begin{bmatrix}
			\mathbf{q}^{(i,s)}_{1} \\
			\mathbf{q}^{(i,s)}_{2} \\
			\mathbf{q}^{(i,s)}_{3}
		\end{bmatrix} = \mathbf{G}' \begin{bmatrix}
		\mathbf{q}^{(i,s)}_{1} \\
		\mathbf{q}^{(i,s)}_{2}
	\end{bmatrix}.
	\end{align}
 	 Thus, \eqref{eq:redundancy} turns into
 	\begin{align}
 		\mathbf{q}^{(i,s)}_f  = \mathbf{G}^* \begin{bmatrix}
 			\mathbf{q}^{(i,s)}_{1} \\
 			\mathbf{q}^{(i,s)}_{2}
 		\end{bmatrix},
 	\end{align}
	for some full rank matrix $\mathbf{G}^*=\mathbf{G}_{r \times \binom{M}{i}}^{(i,s)} \mathbf{G}'$. Since the queries in $\mathbf{q}^{(i,s)}_{2}$  have already been decoded from the previous rounds, together with $\mathbf{q}^{(i,s)}_f$ we can decode $\mathbf{q}^{(i,s)}_{1}$. 


  \textbf{Step 7: Shuffling. } Finally, we shuffle the order of queries sent to each server and also, shuffle the order of the messages appearing in each query. The shufflings are uniformly and independently at random. This is to prevent servers from guessing any orders between messages and queries.

\textbf{Decodability and rate.} Intuitively,  the decodability simply follows since the informative queries  are composed of the desired symbol added to some previously downloaded side information; the most-non-trivial step to guarantee this is the alternative  
 structure plus and structure minus cross different rounds. 
The overall rate is computed as the ratio of the number of informative queries to all queries. The formal proof of the decodability and rate computation is given in Appendix~\ref{sec:decodability}.

\textbf{Privacy.} Intuitively, privacy is satisfied since the queries are symmetric with respect to each message through the index assignment structure. Besides,  the sign assignment step does not reveal the identity of the demanded messages since there is a mapping of symbol signs for different demand scenarios with the help of random variables involved, including the multiplicative factors in Step 1 and sign switching variables in Step 5.2. As a consequence, all possible symbol signs for different demand scenarios will be equally likely. Furthermore, the MDS coding step trivially does not jeopardise privacy. The formal proof of the privacy is given in Appendix~\ref{sec:privacy}.
 
        \section{Conclusion}
    In this paper, we studied the multi-message private computation problem which is an extension to the PC problem of \cite{sun2018capacity} and the MM-PIR of \cite{banawan2018multi}. Our design is based on breaking the scheme into multiple rounds and stages such that round $i$ corresponds to queries in the form of summations of $i$ different symbols. By designing the index and sign of each symbol involved, we were able to reduce the amount of downloaded summations since some of the queries are linear combinations of the others. Furthermore, to use this redundancy while preserving privacy, we used an MDS coding method so that each server cannot distinguish between the redundant and non-redundant queries. Numerical evaluations demonstrated that the rate of the proposed scheme has significant improvements over the baseline scheme for a wide range of system parameters,  thus inheriting the order-optimality of the baseline scheme within a multiplicative factor of $2$. It is also important to point out that the rate of the proposed scheme has very little dependence on $M$, as suggested by Figure \ref{fig:compare_msweep}, while this is not the case for the baseline scheme. This is important since we expect that as long as $K$ is fixed, changing only the number of possible linear combinations should not affect the rate for an order optimal scheme. We observe the same behaviour for the optimal PC scheme in \cite{sun2018capacity}.

    On-going works include deriving the converse bound specifically for the MMPC problem and designing new MMPC schemes with low subpacketization level. 

 \appendices


 	\section{Proof of decodability and rate calculation for the MM-PC scheme} \label{sec:decodability}
        Up until the end of Step 4 (index assignment), it is straightforward to decode the new symbols of demanded messages. This is because these new symbols only exist in informative queries which are built by the addition of these symbols to some already known side information. But after Step 5 (sign assignment), some symbol signs alter to a minus. Since in each stage, the informative and useless queries are build up using some side information from earlier rounds, we should check if after the sign assignment step, these side information queries remain consistent regarding the symbol signs. Before we continue, for the sake of simplicity, we assume other than the first $P$ labels, the other $K-P$ independent messages are labeled from from $P+1$ to $K$. Also for the sake of simplicity, we denote symbols just by the message letter and not using $(*)$ in front of it.
        
        In round $i$, for some informative query $q=u_{\theta}+q_{si}$ where $\theta \in [P]$, the side information part $q_{si}$ within this query should remain consistent on symbol signs compared to the corresponding query in round $i-1$ after sign assignment. Without loss of generality, assume we use structure plus for round $i-1$ and structure minus for round $i$. Also assume from the $i-1$ symbols in $q_{si}$, $v$ of them are symbols of independent messages; i.e., $q_{si}=u_{j_1}+...+u_{j_v}+u_{j_{v+1}}+...+u_{j_{i-1}}$, where $\{j_1,...j_v\} \subset [P+1:K], j_{v+1},...j_{i-1} \subset [K+1:M]$. If $v$ is even, then after sign assignment for query $q_{si}$ in round $i-1$, $u_{j_1}$ would have a plus sign and $u_{j_2}$ a minus sign and so on, until a minus sign for $u_{j_v}$. Since structure plus is used for this round, $u_{j_{v+1}}$ starts with a plus sign and the other signs follow the alternating structure; leading to $q'_{si}=u_{j_1}-u_{j_2}+...-u_{j_v}+u_{j_{v+1}}-u_{j_{v+2}}+...\pm u_{j_{i-1}}$, where $q'_{si}$ is $q_{si}$ after sign assignment. In sign assignment for the query $q$ in round $i$, $u_{\theta}$ starts with a plus sign, $u_{j_1}$ would have a minus sign, $u_{j_2}$ a plus sign up until $u_{j_v}$ with a plus sign. Then, since structure minus is used in this round, $u_{j_{v+1}}$ would start with a minus sign and so on; leading to $q'=u_{\theta}-u_{j_1}+u_{j_2}-...+u_{j_v}-u_{j_{v+1}}+u_{j_{v+2}}-...\pm u_{j_{i-1}}$, where $q'$ is $q$ after sign assignment. It is evident that $q'=u_{\theta}-q'_{si}$, and therefore, the signs are consistent after sign assignment and $q'_{si}$ can be cancelled out to decode for $u_{\theta}$. We can similarly prove the case for $v$ being odd. This completes the proof of consistency for informative queries. 

        We should prove the consistency for useless queries too. Consider the useless query $q=u_{\theta_{l_1}}+...+u_{\theta_{l_n}}+q_{si}$ in round $i$ with $n$ symbols from the demanded messages, i.e. $\{\theta_{l_1}, ..., \theta_{l_n}\} \subset [P]$ and the side information part has $v$ symbols from demanded messages, i.e. $q_{si}=u_{j_1}+...+u_{j_v}+u_{j_{v+1}}+...+u_{j_{i-n}}$, where $\{j_1,...j_v\} \in [P+1:K], j_{v+1},...j_{i-n} \in [K+1:M]$. Assume without loss of generality, in round $i-n$ structure plus is used for sign assignment. For the case $v$ is odd, after sign assignment for query $q_{si}$, $u_{j_1}$ would have a plus sign, $u_{j_2}$ a minus and so on, until $u_{j_v}$ with a plus sign. $u_{j_{v+1}}$ would have a plus sign and the rest change their signs alternatively, leading to $q'_{si}=u_{j_1}-...+u_{j_v}+u_{j_{v+1}}-...\pm u_{j_{i-n}}$, where $q'_{si}$ is $q_{si}$ after sign assignment. There are two cases for $n$, both of which need to be checked. For the case $n$ is even, for round $i$ structure plus will be used again. After sign assignment for the query $q$, $u_{\theta_{l_1}}$ would have a plus sign, $u_{\theta_{l_2}}$ a minus sign and so on, up to $u_{\theta_{l_n}}$ with a minus sign. Also, $u_{j_1}$ would have a plus, $u_{j_2}$ a minus, up until $u_{j_v}$ with a plus. Furthermore, $u_{j_{v+1}}$ would have a plus sign and the rest change their signs alternatively, leading to $q'=u_{\theta_{l_1}}-...-u_{\theta_{l_n}}+u_{j_1}-...+u_{j_v}+u_{j_{v+1}}-...\pm u_{j_{i-n}}$, where $q'$ is $q$ after sign assignment. Thus, it is evident that $q'=u_{\theta_{l_1}}-u_{\theta_{l_2}}+...-u_{\theta_{l_n}}+q'_{si}$. Therefore, again the signs remain consistent after sign assignment. For the case $n$ is odd, similarly it will be resulted that $q'=u_{\theta_{l_1}}-u_{\theta_{l_2}}+...+u_{\theta_{l_n}}-q'_{si}$, where again the consistency is evident. For the case $v$ is even, one can verify the sign consistency similarly. Therefore, we have proved the consistency of signs after the sign assignment step. Notice that in the proof, for convenience, we have assumed the sign switching variables in Step 5.2 are all $1$ and this does not jeopardize the generality, since only the relative symbol signs are important. This completes the proof of decodability.
        
 	To calculate the rate, we first need to calculate the length of the messages $L$. To do so, we need to know the number of informative queries corresponding to each demanded message, since these are the queries that generate new indices in the scheme. This has already been calculated in Step $4.1$ of the scheme as $\binom{M-P}{i-1}$, for a stage in round $i$. Furthermore, the number of stages in round $i$ is determined by $\alpha_i$ which follows~\eqref{eq:stages}. Therefore, collectively from all servers, for each message, $N \alpha_i \binom{M-P}{i-1}$ new symbols appear in round $i$. So the message length is
 	
 	\begin{align} \label{eq:messagelength}
 		L = N \sum_{i=1}^{M-P+1} \alpha_i \binom{M-P}{i-1}.
 	\end{align}
 	
 	Next, we need to calculate the total download $D$ from all servers. Based on Step $6$ of the scheme, in a stage in round $i$, a total of $r=\binom{M-P}{i} - \binom{M-K}{i} + \binom{P}{1} \binom{M-P}{i-1}$ symbols is downloaded. Considering all stages and all servers,
 	
 	\begin{align} \label{eq:download}
 		&D = \nonumber \\
 		&N \sum_{i=1}^{M-P+1} \alpha_i \left(\binom{M-P}{i} - \binom{M-K}{i} + P \binom{M-P}{i-1}\right).
 	\end{align}  
 	
 	For the rate defined in~\eqref{eq:rate}, using~\eqref{eq:messagelength} and~\eqref{eq:download}, we get
 	
 	\begin{align}
 		R_2 = \frac{P \sum_{i=1}^{M-P+1} \alpha_i \binom{M-P}{i-1}}{\sum_{i=1}^{M-P+1} \alpha_i \left(\binom{M-P}{i} - \binom{M-K}{i} + P \binom{M-P}{i-1}\right)}.
 	\end{align}
 	
 	\section{Proof of privacy for the MM-PC scheme} \label{sec:privacy}

        To prove privacy, we must show no matter the choice of $\mathcal{I}$, the realization of the queries for each server has the same probability space.
        We first point out that by the end of index assignment step, the queries to each server are completely symmetrical. This is because the queries are partitioned to multiple stages, and in each stage of round $i$, all the possible $\binom{M}{i}$ types of queries appear. Besides, the indexing structure is also symmetrical from the viewpoint of each message. As stated in Lemma \ref{lem: index}, the index structure has the following general rule: In a stage in round $i$, choose any $i-1$ messages. The set of queries with symbols of these messages have the same index for the other symbol involved in the query.


        To proceed with the proof, we first state the following lemma. 

        \begin{lemma} \label{lemma:disjointindices}
            In a stage of queries to one server, the symbol indices appearing are disjoint from those of other stages in the same server.
        \end{lemma}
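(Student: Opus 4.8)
The plan is to track, for each symbol index, the unique stage in which it is first produced (its \emph{home stage}), and then to argue that within a fixed server an index can reappear only through the side-information mechanism, which by construction always imports from \emph{other} servers. First I would formalize the home index. In round~$1$ each call of $\texttt{new}(u_m)$ returns a strictly fresh position, and in every round $i\ge 2$ the informative queries of a stage produce $\binom{M-P}{i-1}$ genuinely new indices for the demanded messages via $\texttt{new}(\cdot)$ (Step~4.1); no side-information or useless query ever calls $\texttt{new}(\cdot)$. Hence each index value is generated in exactly one stage, and since the demanded-message counters advance synchronously, the total number of distinct values created is $N\sum_{i=1}^{M-P+1}\alpha_i\binom{M-P}{i-1}$, which by \eqref{eq:messagelength} equals $L$. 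Consequently the home-index sets $H(T)$ of the distinct stages $T$ partition $[L]$; in particular $H(T)\cap H(T')=\emptyset$ whenever $T\neq T'$.

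Next I would determine the full index content $I(T)$ of a stage $T$. Inspecting the three query families of Step~4, every index occurring in $T$ is either one of $T$'s own home indices (the $\texttt{new}(\cdot)$ symbols of the informative queries, reused for the non-demanded messages in the side-information queries, see Step~4.2) or an index imported as side information. By construction the side-information parts of all informative and useless queries of $T$ are copied verbatim from stages of strictly lower rounds located on servers \emph{other} than the server carrying $T$ (Step~2 fixes this for the informative side information, and the recursive rule of Step~4.3 does the same for the useless queries). Since those imported indices were themselves home indices of their source stages, we obtain, writing $\mathrm{src}(T)$ for the set of source stages (empty for round~$1$),
\begin{align}
    I(T)=H(T)\;\cup\;\bigcup_{S\in\mathrm{src}(T)}H(S),
\end{align}
where every $S\in\mathrm{src}(T)$ lies on a server different from that of $T$.

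Finally I would fix a server $n$ and two distinct stages $T_1,T_2$ of server $n$ and check $I(T_1)\cap I(T_2)=\emptyset$ against the decomposition above. (1)~$H(T_1)\cap H(T_2)=\emptyset$ by the partition property. (2)~For $H(T_1)\cap H(S)$ with $S\in\mathrm{src}(T_2)$ (and symmetrically with $1,2$ swapped): $S$ sits on a server $\neq n$ while $T_1$ sits on $n$, so they are distinct stages and their home sets are disjoint. (3)~For $H(S)\cap H(S')$ with $S\in\mathrm{src}(T_1)$ and $S'\in\mathrm{src}(T_2)$: this vanishes provided $S\neq S'$, i.e.\ provided $T_1$ and $T_2$ share no common source. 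Thus the entire lemma collapses to the single combinatorial fact that no lower-round stage is used as side information by two different stages of server $n$. This non-sharing is exactly what the recursion \eqref{eq:stages} is built to guarantee: the $(N-1)\alpha_j$ round-$j$ stages available on the other servers are apportioned so that $\binom{P}{m}\alpha_{j+m}$ of them supply the side information of the $\alpha_{j+m}$ round-$(j+m)$ stages for each $m\in[P]$, each source being consumed exactly once. I expect this injective matching of source stages to consuming stages to be the main obstacle, as it is the one point requiring the global bookkeeping of Step~2 rather than a local query-by-query inspection; once it is established, intersections (1)--(3) are all empty and the lemma follows.
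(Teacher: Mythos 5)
Your proof is correct and follows essentially the same route as the paper's: the paper also argues by tracking the three query types, using exactly your two ingredients --- fresh (\texttt{new}) indices belong to a unique stage, and side-information parts are imported from other servers' stages, each of which is consumed only once per server (the accounting of Step~2 and \eqref{eq:stages}). Your version merely formalizes this with the home-set partition and the source-stage decomposition, which is a cleaner write-up of the same argument.
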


        \begin{proof}
            We go through all $3$ types of queries in a stage. We first point out that since the side information queries to a server duplicate the new symbol indices of demanded messages in the same stage, and since these new indices do not appear in the same server in any other stage by definition, these queries have completely disjoint indices compared to other stages in the same server. Furthermore, the side information parts of informative and useless queries have also disjoint indices, since these parts are duplicated from queries to other servers and are used only once in queries to each server, so they do not appear anywhere else in the same server. Additionally, the symbols of demanded messages in useless queries duplicate the new indices of demanded messages in the same server, indicating they do not appear twice in queries to the same server.
        \end{proof}

        With Lemma \ref{lemma:disjointindices} and the symmetry of indices from the perspective of each message, it is readily concluded that for any two choices of demanded messages $\mathcal{I}_1$ and $\mathcal{I}_2$ where $\mathcal{I}_1 \neq \mathcal{I}_2$, the indices of symbols in queries to one server have a one to one mapping by a choice of permutation function $\pi$.


    The proposed scheme has two permutations: one on symbol indices and the other on message indices, where the latter is referred to as \textit{relabeling} as stated in the first step of the scheme. So far we have shown that the permutation function $\pi$ on symbol indices preserves privacy.
    To complete the proof, it only remains to show that the sign assignment step does not jeopardize the symmetry of the queries, in the sense that it does not reveal the private relabeling of the messages, otherwise some information on the requested messages would be leaked. We indicate this by showing that the signs of symbols in queries to one server for two choices of demanded messages $\mathcal{I}_1$ and $\mathcal{I}_2$ where $\mathcal{I}_1 \neq \mathcal{I}_2$, have an one to one mapping by a particular choice of multiplicative variables $\sigma_i, i \in [L]$ and sign switching variables in Step 5.2. Remember that these variables are chosen by the user and private to the server.

    We now introduce an algorithm, by which the sign mapping from $\mathcal{I}_1$ to $\mathcal{I}_2$ will be possible. By each step, the necessary explanations are immediately followed. Notice that since we have proved the one to one mapping of indices, we do not present the indices for ease of understanding. 

    We indicate the multiplicative variables in the setting $\mathcal{I}_1$ with $\sigma_i$s and in the setting $\mathcal{I}_2$ with $\sigma'_i$s. Based on a fixed choice of $\sigma_i$s, we choose the values of $\sigma'_i$s such that the symbol signs in corresponding queries match. The algorithm is as follows.
    
    {\textbf{Step 1.}} \textit{Choose the messages with randomly chosen labels $j_1, j_2, ..., j_i$. Compare the query containing these messages when $\mathcal{I}_2$, i.e. $q_1^{(2)}=\pm\sigma'_{j_2j_3...j_i}W_{j_1}\pm\sigma'_{j_1j_3...j_i}W_{j_2}\pm...\pm\sigma'_{j_1j_2...j_{i-1}}W_{j_i}$, to the query when $\mathcal{I}_1$, i.e. $q_1^{(1)}=\pm\sigma_{j_2j_3...j_i}W_{j_1}\pm\sigma_{j_1j_3...j_i}W_{j_2}\pm...\pm\sigma_{j_1j_2...j_{i-1}}W_{j_i}$. Simply choose $\sigma'_i$s in $q_1^{(2)}$ such that the sign of each symbol matches with the corresponding one in $q_1^{(1)}$.}

    {\textbf{Step 2.}} \textit{All the variables $\sigma'_i$ that were fixed in Step 1, appear also in some other queries, but not together. Go through all these queries, and fix other $\sigma'_i$s involved relative to the other already-fixed variable in Step 1.}

    Consider the query containing message labels $j_0, j_2, ..., j_i$, i.e. $q_2^{(2)}=\pm\sigma'_{j_2j_3...j_i}W_{j_0}\pm\sigma'_{j_0j_3...j_i}W_{j_2}\pm...\pm\sigma'_{j_0j_2...j_{i-1}}W_{j_i}$. The already-fixed variable $\sigma'_{j_2j_3...j_i}$ appears in this query too. Compare this query to its corresponding one when $\mathcal{I}_1$, i.e. $q_2^{(1)}=\pm\sigma_{j_2j_3...j_i}W_{j_0}\pm\sigma_{j_0j_3...j_i}W_{j_2}\pm...\pm\sigma_{j_0j_2...j_{i-1}}W_{j_i}$. Fix the other variables $\sigma'_{j_0j_3...j_i}, ..., \sigma'_{j_0j_2...j_{i-1}}$, relative to the already-fixed $\sigma'_{j_2j_3...j_i}$ such that either $q_2^{(2)}=q_2^{(1)}$ or $q_2^{(2)}=-q_2^{(1)}$.

    After fixing these queries (fixing the $\sigma'_i$s inside), one has the concern whether the fixed $\sigma'_i$s are consistent among the other queries they appear in simultaneously. For example, take the queries $q_1^{(2)}$ and $q_2^{(2)}$ fixed in Steps 1 and 2. In $q_1^{(2)}$ the message labels $j_1, j_2, ..., j_i$ and in $q_2^{(2)}$ the message labels $j_0, j_2, ..., j_i$ appear. In these two queries, the variables $\sigma'_{j_1j_3...j_i}$ and $\sigma'_{j_0j_3...j_i}$ are fixed. We should check in the query containing both of these together, i.e. containing message labels $j_0, j_1, j_3, ..., j_i$, whether their relative values remains consistent. In general we should prove, and this will also be needed in the following steps of our algorithm, whether any two variables of $\sigma'_i$s, when fixed in two different queries, maintain a correct relative value concerning in the query in which both of them appear. This will be proved in the following lemma. 

    \begin{lemma} \label{lem:signconsistency}
        The already-fixed variables $\sigma'_{j_1j_3...j_i}$ and $\sigma'_{j_0j_3...j_i}$, fixed in queries $q_1^{(2)}$ and $q_2^{(2)}$, maintain a correct relative sign when they appear together in another query. 
    \end{lemma}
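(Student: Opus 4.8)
The plan is to strip the claim down to a single, labeling-independent sign identity on three messages. First I would introduce clean notation for the deterministic part of Step~5.1: let $\epsilon_{\mathcal{I}}(m,\mathcal{C})$ denote the structural sign (the combination of the lexicographic intra-parenthesis alternation and the inter-parenthesis structure plus/minus) that Step~5.1 attaches, under the relabeling induced by $\mathcal{I}$, to the symbol of message $m$ that sits next to the side-information set $\mathcal{C}$. The actual signed coefficient of that physical symbol is then $\epsilon_{\mathcal{I}}(m,\mathcal{C})\,\sigma^{(\mathcal{I})}_{\mathcal{C}}\,\tau_{\mathcal{C}\cup\{m\}}$, where $\sigma^{(\mathcal{I})}_{\mathcal{C}}$ is the Step-1 multiplicative factor on the symbol index (well defined, by Lemma~\ref{lem: index}, from $\mathcal{C}$ alone) and $\tau_{\mathcal{C}\cup\{m\}}$ is the Step-5.2 per-query switch. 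Since $\tau$ multiplies every symbol of a query equally, it cancels in any within-query ratio of two coefficients, so those within-query ratios are the only sign data the mapping must reproduce.

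Next I would make the combinatorics explicit. Writing $B=\{j_3,\dots,j_i\}$ for the messages common to the three relevant queries and $\mathcal{C}_a=\{j_a\}\cup B$, the queries $q_1^{(2)}=\{j_1,j_2\}\cup B$, $q_2^{(2)}=\{j_0,j_2\}\cup B$ and the joint query $Q_3=\{j_0,j_1\}\cup B$ realize exactly the three pairs of the triple $\{j_0,j_1,j_2\}$ over $B$; the two already-fixed variables $\sigma'_{\mathcal{C}_1},\sigma'_{\mathcal{C}_0}$ are the factors of $W_{j_0},W_{j_1}$ in $Q_3$, and they were fixed in Steps~1--2 by matching the $W_{j_2}$-coefficient in $q_1^{(2)}$ and in $q_2^{(2)}$ respectively. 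Substituting those two fixing relations into the relative sign of $W_{j_0},W_{j_1}$ demanded in $Q_3$, every $\sigma$ and every $\tau$ cancels, and the lemma reduces to showing that the triangle product
\begin{equation}
\Pi_{\mathcal{I}} = \frac{\epsilon_{\mathcal{I}}(j_0,\mathcal{C}_1)}{\epsilon_{\mathcal{I}}(j_1,\mathcal{C}_0)}\cdot\frac{\epsilon_{\mathcal{I}}(j_1,\mathcal{C}_2)}{\epsilon_{\mathcal{I}}(j_2,\mathcal{C}_1)}\cdot\frac{\epsilon_{\mathcal{I}}(j_2,\mathcal{C}_0)}{\epsilon_{\mathcal{I}}(j_0,\mathcal{C}_2)}
\end{equation}
takes the same value under the two relabelings $\mathcal{I}_1,\mathcal{I}_2$. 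Note that the three fractions are precisely the within-query relative signs inside $Q_3$, $q_1^{(2)}$ and $q_2^{(2)}$.

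I would then evaluate $\Pi_{\mathcal{I}}$ by decomposing $\epsilon_{\mathcal{I}}(m,\mathcal{C})$ into the inter-parenthesis structure sign $\eta$ (fixed by the round index $i$, hence identical across all three queries and across $\mathcal{I}_1,\mathcal{I}_2$) times the intra-parenthesis alternation $(-1)^{\rho-1}$, with $\rho$ the lexicographic rank of $m$ in its parenthesis. Because each of $j_0,j_1,j_2$ appears once in a numerator and once in a denominator of $\Pi_{\mathcal{I}}$, the $\eta$-contributions cancel; and because each base message $b\in B$ occurs with the same rank parity in a matched pair of the three queries, the $B$-contributions cancel as well. What survives is only the mutual rank parities of $j_0,j_1,j_2$, and a short computation collapses the product to
\begin{equation}
\Pi_{\mathcal{I}} = (-1)^{\,\mathbf{1}\{j_0\approx j_1\}+\mathbf{1}\{j_1\approx j_2\}+\mathbf{1}\{j_0\approx j_2\}},
\end{equation}
where $\mathbf{1}\{x\approx y\}$ records whether $x$ and $y$ lie in the same (independent or dependent) parenthesis under $\mathcal{I}$. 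Since three messages are split into only two parentheses, the number of same-parenthesis pairs is $\binom{3}{2}=3$ or $\binom{2}{2}=1$, hence always odd, so $\Pi_{\mathcal{I}}=-1$ for \emph{every} labeling; in particular $\Pi_{\mathcal{I}_1}=\Pi_{\mathcal{I}_2}=-1$, which is the lemma.

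I expect the main obstacle to be the bookkeeping of the third step that isolates this parity invariant, rather than any deep difficulty: one must verify rigorously that $\eta$ enters an even number of times and that every base-message rank parity occurs in a cancelling pair, and this relies on the structural fact that $q_1^{(2)},q_2^{(2)},Q_3$ lie in the same round and share the base $B$, so that $\eta$ and the $B$-ranks appear symmetrically. The delicate point is that a relabeling can move one of $j_0,j_1,j_2$ across the independent/dependent boundary, changing both its rank and its same-parenthesis relationships simultaneously; the content of the argument is that these two changes always conspire so that the surviving quantity is the coarse datum ``which of $j_0,j_1,j_2$ share a parenthesis,'' whose parity is labeling-independent by the elementary counting above.
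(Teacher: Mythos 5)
Your proposal is correct, and it takes a genuinely different route from the paper's own argument. The paper proves Lemma~\ref{lem:signconsistency} by writing out one representative configuration (all of $j_0,j_1,j_2$ independent under $\mathcal{I}_1$ with a particular parity of intervening labels, and $j_0$ moved to the dependent set under $\mathcal{I}_2$), checking the three queries by hand, and asserting that ``all other ones can be proved similarly.'' You instead isolate the invariant that makes every case work at once: after cancelling the per-query switches and the $\sigma'$ variables around the triangle, consistency is equivalent to $\Pi_{\mathcal{I}_1}=\Pi_{\mathcal{I}_2}$, and you show $\Pi_{\mathcal{I}}=-1$ identically. I verified the computation: writing the structural sign of $m$ in a query with base $B$ as $\eta^{[m\ \mathrm{dep}]}(-1)^{c_B(m)+[\,j_b\sim m,\ j_b<m\,]}$, the $\eta$-exponents and the $c_B(\cdot)$ terms each appear once in a numerator and once in a denominator of the cyclic product, and the surviving exponent reduces mod $2$ to $[j_0\sim j_1]+[j_1\sim j_2]+[j_0\sim j_2]$, which for three messages in two parentheses is $3$ or $1$, hence odd; the resulting value $\Pi=-1$ reproduces the paper's worked case (relative signs $+,+,-$ in the all-independent setting and $+,-,+$ after $j_0$ becomes dependent). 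What your approach buys is a uniform, case-free proof that also transfers verbatim to the consistency checks in Steps 3 and 4 of the privacy algorithm, where the paper again appeals to similarity; what it costs is the bookkeeping needed to make the decomposition of $\epsilon_{\mathcal{I}}(m,\mathcal{C})$ and the two cancellation claims fully rigorous, which you have correctly identified as the remaining work.
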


    \begin{proof}
        We prove the lemma for one setting of the labels $j_0, j_1, j_2$ for each case of $\mathcal{I}_1$ and $\mathcal{I}_2$, since all other ones can be proved similarly. For ease of understanding, assume that every $\sigma_i=1$ when $\mathcal{I}_1$. Assume when $\mathcal{I}_1$, the messages with labels $j_0, j_1, j_2$ are all in the independent set with the ordering $j_0<j_1<j_2$. Additionally, assume among independent messages in $j_1, j_2, ..., j_i$, there is an odd number of messages between $j_1$ and $j_2$. Moreover among independent messages in $j_0, j_2, ..., j_i$, there is again an odd number of messages between $j_0$ and $j_2$. Based on this setting, after sign assignment we have the following queries for the three set of labels $\{j_1, j_2, ..., j_i\}$, $\{j_0, j_2, ..., j_i\}$, and $\{j_0, j_1, j_3, ..., j_i\}$ respectively,
        \begin{align}
            q_1^{(1)} = W_{j_1} + W_{j_2} \pm \cdots \\
            q_2^{(1)} = W_{j_0} + W_{j_2} \pm \cdots \\ 
            q_3^{(1)} = W_{j_0} - W_{j_1} \pm \cdots
        \end{align}

        For $\mathcal{I}_2$, we consider the case where labels $j_1$ and $j_2$ are among the independent messages which have odd number of independent messages in between based on the ordering among $j_1, j_2, ..., j_i$. Additionally we assume $j_0$ is among dependent messages. With this setting, if we assume the relative sign between $W_{j_0}$ and $W_{j_2}$ in $q_2^{(2)}$ is minus, then we have,
        \begin{align}
            q_1^{(2)} = \sigma'_{j_2j_3...j_i} W_{j_1} + \sigma'_{j_1j_3...j_i} W_{j_2} \pm \cdots \\
            q_2^{(2)} = \sigma'_{j_2j_3...j_i} W_{j_0} - \sigma'_{j_0j_3...j_i} W_{j_2} \pm \cdots \\ 
            q_3^{(2)} = \sigma'_{j_1j_3...j_i} W_{j_0} + \sigma'_{j_0j_3...j_i} W_{j_1} \pm \cdots
        \end{align}

        To fix the variables in $q_1^{(2)}$ and $q_2^{(2)}$, we should set $\sigma'_{j_2j_3...j_i}=\sigma'_{j_1j_3...j_i}=1$ and $\sigma'_{j_0j_3...j_i}=-1$. This leads to $q_3^{(2)} = W_{j_0} - W_{j_1} \pm ...$, which as can be seen, automatically matches with $q_3^{(1)}$. So the relative signs remain consistent and the lemma is proved. 
        
    \end{proof}

    \begin{remark}
        The reason why only the relative values of $\sigma'_i$s are important, is because of the sign switching variables of Step 5.2 in the scheme. When the relative signs of symbols are correct, to match these signs between two corresponding queries of different labelings $\mathcal{I}_1$ and $\mathcal{I}_2$, we only need to multiply the whole query with a $-1$ or a $+1$. 
    \end{remark}

    In Step 2, we fixed all the queries that are within $1$ message distance from the first randomly chosen query $q_1^{(2)}$; meaning the queries in Step 2 have $i-1$ messages in common with that of $q_1^{(2)}$ and are only different in $1$ message. In Step 3, we fix the queries with distance $2$ from $q_1^{(2)}$.
    
    {\textbf{Step 3.}} \textit{Consider all the queries with distance $2$ from $q_1^{(2)}$. Fix the variables $\sigma_i'$s within these queries relative to the already-fixed ones in the first two steps.}

    Consider the query containing messages with labels $j'_1, j'_2, j_3, ...,j_i$, which is in distance $2$ from $q_1^{(2)}$. The variables $\sigma'_{j'_2, j_3, ...,j_i}$ and $\sigma'_{j'_1, j_3, ...,j_i}$ have been already fixed in Step 2 of the algorithm, and they both appear in the mentioned query. We should prove their relative value remains correct in this new query. This is proved in the following lemma. 

    \begin{lemma}
        The already-fixed values of $\sigma'_i$s within queries in previous steps, maintain the correct relative values in Step 3.
    \end{lemma}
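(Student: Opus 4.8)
The plan is to recognize that this claim is a \emph{consistency-around-a-cycle} statement and to reduce it to a single chordless $4$-cycle in the graph whose vertices are the $\binom{M}{i}$ queries of the stage and whose edges join queries differing in exactly one message (the Johnson graph on $i$-subsets of $[M]$). Writing $q_1^{(2)}$ for the base query on labels $\{j_1,j_2,j_3,\dots,j_i\}$, I set
\[
A'=\{j_1',j_2,j_3,\dots,j_i\},\quad B'=\{j_1,j_2',j_3,\dots,j_i\},\quad R=\{j_1',j_2',j_3,\dots,j_i\}.
\]
Then $q_1^{(2)}$--$A'$--$R$--$B'$--$q_1^{(2)}$ is a chordless square: $A'$ and $B'$ are the two distance-$1$ queries at which, in Step~2, the variables $w=\sigma'_{j_1'j_3\cdots j_i}$ (the factor on $W_{j_2}$ in $A'$) and $u=\sigma'_{j_2'j_3\cdots j_i}$ (the factor on $W_{j_1}$ in $B'$) were respectively fixed, while $R$ is the distance-$2$ query in which both $u$ and $w$ reappear (on $W_{j_1'}$ and $W_{j_2'}$). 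Thus the lemma is exactly the assertion that the relative value of $u$ and $w$ obtained by traversing this square equals $+1$.

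First I would normalize $\mathcal{I}_1$ by taking $\sigma_i\equiv 1$, so that under $\mathcal{I}_1$ the sign of $W_m$ in a query $Q$ is the pure structural sign $\epsilon_m(Q;\mathcal{I}_1)$, whereas under $\mathcal{I}_2$ it is $\epsilon_m(Q;\mathcal{I}_2)\,\sigma'_{Q\setminus m}$. Here $\epsilon_m(Q;\mathcal{I})$ is, by Step~5, a product of purely local parities: the inter-part structure-plus/minus sign, which depends only on the round $i$ and on whether $m$ lies in the independent or the dependent part under $\mathcal{I}$, times the intra-part alternating sign $(-1)^{r-1}$, where $r$ is the rank of $m$ among the same-type messages of $Q$ ordered by label. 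The Step~2 matching condition fixing $u$ (resp.\ $w$) is $\sigma'_{Q\setminus m}=g_Q\,\epsilon_m(Q;\mathcal{I}_1)\epsilon_m(Q;\mathcal{I}_2)$ for the appropriate $(Q,m)$, where $g_Q\in\{\pm1\}$ is the per-query global sign realized by the switching variables of Step~5.2. Consequently the relative sign that a path imposes on $R$ is the product over its edges of the discrepancy factors $\epsilon_m(\cdot;\mathcal{I}_1)\epsilon_m(\cdot;\mathcal{I}_2)$, and the lemma reduces to showing that this product is the same along the two halves $q_1^{(2)}\!-\!A'\!-\!R$ and $q_1^{(2)}\!-\!B'\!-\!R$.

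The conceptual point, and where I would focus the work, is that traversing the whole square returns every message to the query $R$ with an identical set of companions, so the cumulative rank-parity shifts produced by the two swaps $j_1\!\leftrightarrow\! j_1'$ and $j_2\!\leftrightarrow\! j_2'$ are independent of the order in which they are performed. I would first factor out the block $\{j_3,\dots,j_i\}$, common to all four queries: its symbols keep the same companions throughout, so its discrepancy contribution is identical on both half-paths and cancels in the ratio. What remains is a finite case analysis, in the spirit of Lemma~\ref{lem:signconsistency}, over whether each of $j_1,j_1',j_2,j_2'$ is independent or dependent under $\mathcal{I}_1$ and under $\mathcal{I}_2$, and over the parities of the number of same-type messages lying between the relevant labels; in each case one checks that the intra-part alternating contributions and the inter-part round sign combine to the same value along both paths.

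The hard part will be the sub-case in which a swap changes a message's type (independent $\leftrightarrow$ dependent): such a swap moves the symbol between the two parts, simultaneously toggling whether the inter-part sign applies to it and shifting by one the rank-parities of \emph{all} other same-type messages in the query. I would handle this by noting that the induced rank shift affects exactly the messages whose labels straddle the swapped label, and that this straddling set is the same regardless of which swap is carried out first; hence the extra parity is incurred identically on both halves of the square and again cancels, giving cycle product $+1$. Finally, together with Lemma~\ref{lem:signconsistency} for triangles and an induction on the BFS distance from $q_1^{(2)}$ (triangles and these squares generate the cycle space of the Johnson graph), this produces a globally consistent choice of the $\sigma_i'$, which is precisely what the privacy argument requires.
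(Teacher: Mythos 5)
You take a genuinely different route from the paper, and your approach is sound. The paper proves this lemma by pure reduction: it anchors the two already-fixed variables $w=\sigma'_{j_1'j_3\cdots j_i}$ and $u=\sigma'_{j_2'j_3\cdots j_i}$ at the Step-2 queries $\{j_1',j_2,j_3,\dots,j_i\}$ and $\{j_2',j_2,j_3,\dots,j_i\}$ --- both obtained from the base query by replacing the \emph{same} element $j_1$, hence both containing the Step-1 variable $\sigma'_{j_2j_3\cdots j_i}$ --- so that these two queries together with the Step-3 query $R=\{j_1',j_2',j_3,\dots,j_i\}$ pairwise share exactly one variable and form, after relabeling, precisely the three-query configuration of Lemma~\ref{lem:signconsistency}; no new computation is done. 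You instead anchor $w$ and $u$ at $A'=\{j_1',j_2,j_3,\dots,j_i\}$ and $B'=\{j_1,j_2',j_3,\dots,j_i\}$, which replace \emph{different} elements of the base, so your closed configuration is a chordless square through the base query rather than a triangle; this is not an instance of Lemma~\ref{lem:signconsistency} and needs its own parity verification, which you sketch but do not complete. Both anchorings are legitimate, because the several Step-2 fixings of a single variable agree (that agreement is exactly what Lemma~\ref{lem:signconsistency} provides), so the paper's choice buys a one-line proof while yours requires fresh casework. What your route buys in exchange is a cleaner and more explicit framework: the discrepancy-factor/cycle-product formulation says precisely what ``consistency'' means (the paper never formalizes it), and your observation that triangles and these squares span the cycle space of the Johnson graph upgrades the paper's informal induction in Step 4 (``proved similar to the previous steps'') into an actual global-consistency argument, which is what the privacy proof ultimately needs. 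Since the paper's own verification (one type/parity case in Lemma~\ref{lem:signconsistency}, with ``all other ones can be proved similarly'') is no more complete than your sketched case analysis, I do not count the unfinished square casework as a gap relative to the paper; just note that the type-changing-swap subcase you flag is exactly where the square bookkeeping differs from the triangle computation, so it must be written out rather than cited.
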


    \begin{proof}
        Take two queries containing the message labels $j'_1, j_2, ..., j_i$ and $j'_2, j_2, ..., j_i$. These queries are fixed in Step 2, so the values for $\sigma'_{j'_2, j_3, ...,j_i}$ and $\sigma'_{j'_1, j_3, ...,j_i}$, are already fixed in these two queries. Exactly like the proof in Lemma \ref{lem:signconsistency}, the relative signs of these variables remain correct in the query with labels $j'_1, j'_2, j_3, ...,j_i$, which contains both variables.
    \end{proof}

    The rest of the algorithm is evident. 

    {\textbf{Step 4.}} \textit{Each time increase the distance of queries from $q_1^{(2)}$ by one, and fix the not-yet-fixed $\sigma'_i$s within these queries. Continue this process until the last step, where the distance is $i$. Then, all the queries will be exhausted and fixed.}

    The correctness of relative signs of the already-fixed variables in each step is proved similar to the previous steps. 

    It is readily evident by our algorithm, that if the mapping of symbol signs from the setting $\mathcal{I}_2$ to $\mathcal{I}_1$ is done by the values $\{\sigma'^{*}_i\}$ and sign switching variables in vector $\mathbf{s}$, then there would be another set of answers $\{-\sigma'^{*}_i\}$ and $-\mathbf{s}$ and there exists no other set of answers. This proves that the mapping from all possible setting to the setting $\mathcal{I}_1$, is uniformly random, thus hiding the private labeling in Step 1 of the scheme. This completes the proof the privacy.

\section{Proofs of Theorem \ref{cor:repetition} and Theorem \ref{thm:order-optimal}} \label{sec:rep}
 	
 	To calculate the rate of the repetition scheme which uses the PC scheme for each demanded message separately, we note that at each use, the scheme downloads extra decodable symbols from other demanded messages. Thus, the rate for the repetition scheme $R_{1}$ would be 
 	\begin{align}
 		R_{1} = C + \Delta(M,K,P,N),
 	\end{align} 
 	where $C$ is the capacity of the single-message private computation scheme which is achieved by the PC scheme and $\Delta(M,K,P,N)$ is the rate due to extra decodable symbols in every use of PC. In round $1$ of PC, the user downloads one new symbol for each $K$ independent message (and because of the dependency involved, one new symbol for each of the $M$ messages). So in each use of PC for one demanded message, the user downloads extra $P-1$ symbols from the other $P-1$ demanded messages on each server. Therefore, the total number of extra symbols downloaded in each use of PC is $(P-1)N$. To calculate $\Delta(M,K,P,N)$, we proceed as follows. We note that the total download in the PC scheme follows $D_{PC} = \frac{L_{PC}}{C}$ and since $L_{PC}=N^M$ and $C=\frac{1-\frac{1}{N}}{1-(\frac{1}{N})^K}$, $D_{PC} = \frac{N^M \left(1-(\frac{1}{N})^K\right)}{1-\frac{1}{N}}$. Thus, we have
 	\begin{align}
 		\Delta(M,K,P,N) = \frac{(P-1)N}{D_{PC}} = \frac{(P-1)(N-1)}{N^M \left(1-(\frac{1}{N})^K\right)}.
 	\end{align}
 	
 	In terms of privacy, since every single use of PC is private, the repetition scheme would also be private. The rate of this scheme corresponds to the first term of the maximization in Theorem \refeq{cor:repetition}. 
 	
 	On the other hand, by treating each message as an independent one, we can use the MM-PIR scheme of~\cite{banawan2018multi} as a solution to the MM-PC problem. The rate of this scheme corresponds to the second term in the maximization. Therefore, Theorem \ref{cor:repetition} is proved.

    To prove Theorem \ref{thm:order-optimal}, we first note that the capacity of the MM-PIR defined in \cite{banawan2018multi} for $K$ total messages is  an upper bound to our problem, since this setting assumes independency among all messages and the MM-PC problem allows for requesting not only messages themselves, but also their linear combinations. For the case $P \leq \frac{K}{2}$, the upper bound for the MM-PIR problem would be $R^{\star}\leq R_u=\frac{1-\frac{1}{N}}{1-(\frac{1}{N})^{\lfloor{\frac{K}{P}}\rfloor}}$. 
        Note that the achieved rate in~\eqref{eq:rep scheme rate} is no less than $\frac{1-\frac{1}{N}}{1-(\frac{1}{N})^K}$, which is achieved by 
        using the PC scheme in~\cite{sun2018capacity} $P$ times. 
        Thus
        \begin{align}
            \frac{R_u}{R_1} \leq \frac{\frac{1-\frac{1}{N}}{1-(\frac{1}{N})^{\lfloor{\frac{K}{P}}\rfloor}}}{\frac{1-\frac{1}{N}}{1-(\frac{1}{N})^K}}=\frac{1-(\frac{1}{N})^K}{1-(\frac{1}{N})^{\lfloor{\frac{K}{P}}\rfloor}}\leq \frac{1}{1-\frac{1}{N}} \leq 2.
        \end{align}
        For the case $P \geq \frac{K}{2}$, the capacity of MM-PIR follows $R_u=\frac{1}{1+\frac{K-P}{PN}}$. Thus
        \begin{align}
            \frac{R_u}{R_1} \leq \frac{\frac{1}{1+\frac{K-P}{PN}}}{\frac{1-\frac{1}{N}}{1-(\frac{1}{N})^K}} \leq \frac{1}{1-\frac{1}{N}} \leq 2.
        \end{align}
  
 	        \section{Proof of Lemma~\ref{lem:redundancy}} 
        \label{sec:proof of redundancy}
We first point out that the structure of the queries in each stage, up until the end of Step 4 (index assignment), is exactly like the structure of the multicast messages in the delivery phase of the MAN coded caching scheme with $M$ files and $M$ users in which every user demands a different file; thus all the files are requested. To restate the index structure in Lemma \ref{lem: index}, take a stage in round $i$ and choose any $i-1$ messages. The set of queries with these messages have the same index for the other symbol involved in the query. This is the exact same structure as in the delivery phase of the MAN scheme when $t=i-1$, where each multicast message includes $t+1$ users. 

In~\cite{wan2021optimal}, the authors show that when some of the demanded files are linear combinations of the others, by carefully designing the signs of each symbol in the delivery phase, some of the multicast messages are linear combinations of the other ones, and thus redundant. In their paper, the users requesting independent messages are called \textit{leaders}, and the other ones \textit{non-leaders}. Therefore in our scheme, the independent messages correspond to the leaders, and the dependent ones to the non-leaders. In~\cite[Appendix B]{wan2021optimal} they show using the structure plus in sign assignment, the multicast messages which do not include any leaders, are redundant and can be derived by other multicast messages. This is the first part of the proof.


On the other hand, in a stage, we can take a slight modification on the composition of the multicast messages in~\cite[Eq. 54]{wan2021optimal}, where the sign between the required blocks by the leaders and the non-leaders is changed from $+1$ (structure plus) to $-1$ (structure minus) such that the new composition of  $X_{\mathcal{S}}$ becomes 
\begin{align}
  X_{\mathcal{S}}&=\sum_{i\in [|\mathcal{L}_{\mathcal{S}}|]} (-1)^{i-1} B_{\mathcal{L}_{\mathcal{S}}(i), \mathcal{S}\setminus \left\{\mathcal{L}_{\mathcal{S}}(i)\right\}} \nonumber\\&  - \sum_{j\in [|\mathcal{N}_{\mathcal{S}}|]}(-1)^{j-1} B_{\mathcal{N}_{\mathcal{S}}(j),\mathcal{S}\setminus \{\mathcal{N}_{\mathcal{S}}(j)\}}. \label{eq:general multicast message in Fq}  
\end{align} 
By the new multicast message composition in~\eqref{eq:general multicast message in Fq}, we can still prove the~\cite[Eq. (57a)]{wan2021optimal} holds, which refers to the redundancy of some multicast messages, but with sightly modified decoding coefficients 
\begin{align}
    \beta_{\mathcal{A},\mathcal{S}}= (-1)^{1 +\text{Tot}(\overline{\text{Ind}}_{\mathcal{S}})+ |\mathcal{S}\setminus \mathcal{A}|} \text{det}(\mathbb{D}^{\prime}_{\mathcal{A} \setminus \mathcal{S},\mathcal{L}_{\mathcal{S}} }). \label{eq:decoding coefficients}
\end{align}
The proof of~\cite[Eq. (57a)]{wan2021optimal} with new multicast message composition in~\eqref{eq:general multicast message in Fq} and decoding coefficients in~\eqref{eq:decoding coefficients} directly follows the same steps as in~\cite[Appendix B]{wan2021optimal}, and thus we do not repeat it. This proves the same redundancy exists with the structure minus of sign assignment. Notice that the sign switching variables in Step 5.2 clearly does not affect the redundancy. This completes the proof of the theorem.

		\bibliographystyle{IEEEtran}
 	\bibliography{references}
\end{document}